\newcommand{\Rappor}{\ensuremath{\textsc{RAPPOR}}\xspace}
\newcommand{\alprappor}{\alpha}
\newcommand{\betrappor}{\beta}
\begin{document}
\title{Inference under Information Constraints III: Local Privacy Constraints\footnote{A preliminary version of this work containing partial results appeared in the \emph{Proceedings of the 22\textsuperscript{nd} International Conference on Artificial Intelligence and Statistics (AISTATS)}, 2019~\cite{ACFT:19}. }}
\date{}

\author{
  \IEEEauthorblockN{Jayadev Acharya\IEEEauthorrefmark{1}} 
  \and \IEEEauthorblockN{Cl\'{e}ment L. Canonne\IEEEauthorrefmark{2}}
  \and \IEEEauthorblockN{Cody Freitag\IEEEauthorrefmark{3}}
  \and \IEEEauthorblockN{Ziteng Sun\IEEEauthorrefmark{4}} 
  \and \IEEEauthorblockN{Himanshu Tyagi\IEEEauthorrefmark{5}}
}

\maketitle 

{
\renewcommand{\thefootnote}{}
  \footnotetext{%
      \IEEEauthorblockA{\IEEEauthorrefmark{1}Cornell University. Email: acharya@cornell.edu. Supported by NSF-CCF-1846300 (CAREER), NSF-CCF-1815893, and a Google Faculty Research Award.}\\
      \indent\IEEEauthorblockA{\IEEEauthorrefmark{2}University of Sydney. Email: ccanonne@cs.columbia.edu. This work was performed while a Goldstine Postdoctoral Fellow at IBM Research, and a Motwani Postdoctoral Fellow at Stanford University.}\\
      \indent\IEEEauthorblockA{\IEEEauthorrefmark{3}Cornell Tech. Email: cfreitag@cs.cornell.edu. Supported in part by NSF GRFP award DGE-1650441.}\\
      \indent\IEEEauthorblockA{\IEEEauthorrefmark{4}Cornell University. Email: zs335@cornell.edu. Supported in part by NSF-CCF-1846300 (CAREER).}\\
      \indent\IEEEauthorblockA{\IEEEauthorrefmark{5}Indian Institute of Science.  Email: htyagi@iisc.ac.in.
Supported in part by a research grant from the Robert Bosch Center for Cyberphysical Systems (RBCCPS), Indian Institute of Science, Bangalore.  
  }
}
\renewcommand{\thefootnote}{\arabic{footnote}}
\setcounter{footnote}{0}

\begin{abstract}

We study goodness-of-fit and independence testing of discrete distributions in a \new{setting} where samples are distributed across multiple users. The users wish to %
preserve the privacy of their \new{data} while enabling a central server to perform the tests. Under the notion of local differential privacy, we propose simple, sample-optimal, and communication-efficient protocols for these two questions in the noninteractive setting,  where in addition users may or may not share a common random seed. In particular, we show that the availability of shared (public) randomness greatly reduces the sample complexity. 
Underlying our public-coin protocols are privacy-preserving mappings which, when applied to the samples, minimally contract the distance between their respective probability distributions.

\end{abstract}
\newpage

\section{Introduction}\label{sec:intro}
  Inferring statistical properties of data sources while maintaining
their privacy is a core problem in privacy-preserving statistics. A
widely established notion to achieve this is \emph{local differential
privacy (LDP)}, introduced in~\cite{KLNRS:08, EvfimievskiGS:03}. The
data samples are distributed across users (``players''), who do not
trust the centralized data curator, which can be \eg{} corporate
entities or government agencies. The data samples are privatized via a
noise addition mechanism that is locally differentially private
(see~\cref{def:ldp:channel}). This falls under the general setting of
statistical inference under \emph{local information constraints},
namely constraints on information that each player can reveal about
its sample.

Recently, a subset of the authors have initiated a systematic 
study of such problems under general constraints. 
In particular,~\cite{AcharyaCT:IT1} provides
a framework for deriving lower bounds for such problems
and~\cite{ACT:19} provides sample-optimal
algorithms for communication constraints. This paper, the third in
this series, focuses on local privacy constraints. 
Specifically, we consider two of the most
fundamental goodness-of-fit tasks, testing identity and independence
of discrete distributions, and design sample-optimal LDP mechanisms
for these tasks. We
restrict to simultaneous message passing protocols and 
lay special emphasis on the availability
of \emph{public randomness} at the players (\ie, a common random seed
shared by all parties)\footnote{We assume private randomness is always available at the players. Formal definitions can be found in Section~\ref{sec:setup}.} and seek to answer the following.

\begin{center}
\textit{What is the sample complexity of testing identity and independence of discrete distributions under local differential privacy? Does the sample complexity depend on whether public randomness is available?}
\end{center}
The role of public randomness in the design and analysis of
distributed statistical inference has hitherto been largely
overlooked. We fully resolve this question by providing tight bounds
on the sample complexity of identity testing and independence testing
of discrete distributions under local differential privacy, both with
and without public randomness. Our results show that, for these two
composite hypothesis testing tasks, schemes which allow for public
randomness can achieve significantly smaller sample complexity than
those who do not.  Interestingly, this is in contrast with the seminal
work of Tsitsiklis~\cite{Tsitsiklis:93}, which established that public
randomness provides no advantage in the context of
distributed \emph{simple} hypothesis testing without local privacy constraints.

\subsection{Results and techniques}
We study two inference problems over discrete distributions, identity
testing and independence testing under $\priv$-LDP
(at a high level, the privacy parameter $\priv>0$ bounds the (worst-case) statistical leakage of any player's data, and smaller values imply stronger privacy guarantees; see~\cref{sec:setup} for formal definitions). Our results are
summarized in~\cref{table:results}; we outline and discuss them below.

\begin{table}[ht]\centering
\renewcommand{\arraystretch}{1.25}
\caption{Summary of our results and previous work.}\label{table:results}
\begin{tabular}{|c|>{\columncolor{red!10}} c|>{\columncolor{blue!10}} c|>{\columncolor{red!10}} c|>{\columncolor{blue!10}} c| }\hline
\multirow{2}{*}{} & \multicolumn{2}{|c|}{\textbf{This work}} & \multicolumn{2}{|c|}{\textbf{Previous work}} \\\cline{2-5}
& Private-Coin & Public-Coin & Private-Coin & Public-Coin \\\hline
Identity Testing & $O\Paren{\frac{\ab^{3/2}}{\dst^2\priv^2} }$ & $\bigO{\frac{\ab}{\dst^2\priv^2} }$  & $\bigO{\frac{\ab^{2}}{\dst^2\priv^2} }$, $\bigOmega{\frac{\ab^{3/2}}{\dst^2\priv^2} }$ & $\bigOmega{\frac{\ab}{\dst^2\priv^2} }$ \\\hline
Independence Testing & $\bigTheta{\frac{\ab^{3}}{\dst^2\priv^4} }$  & $\bigTheta{\frac{\ab^{2}}{\dst^2\priv^2} }$  & \multicolumn{2}{|c|}{\cellcolor{red!10}$O\Paren{\frac{\ab^{4}}{\dst^2\priv^2} }$ } \\\hline
\end{tabular}
\end{table}

In the identity testing question, there is a known reference
distribution $\q$ over $[\ab]\eqdef\{1,\ldots, \ab\}$, and the
players' samples are i.i.d. from an unknown distribution $\p$. The
goal is to test the hypotheses $\mathcal{H}_0: \p=\q$ and
$\mathcal{H}_1:\totalvardist{\p}{\q}>\dst$ using $\priv$-LDP
mechanisms.  We seek to characterize the \emph{sample complexity}
of this task, which is the minimum number of players to solve this
problem with a (small) constant two-sided error. Without privacy
constraints, when the true samples of $\p$ are available to the
central data curator (``referee''), the optimal sample complexity of
identity testing is known to be $\Theta(\ab^{1/2}/\dst^2$).

There are two parts of the problem. The first is to design \emph{privacy-preserving mechanisms} that the players use to encode their data to be sent the server. 
The second is to design \emph{post-processing algorithms} that the server uses to decide the output of the test given the privatized messages. 

We first consider the task of designing optimal post-processing algorithms for existing $\rho$-LDP mechanisms. This is of interest in cases where the privatization mechanisms are in place, and changing them is impossible or too expensive---for instance, when an organization has already deployed a data aggregation pipeline, and seeks to add a statistical inference component to it without overhauling the entire system.

Arguably the simplest privatization scheme is $\ab$-randomized response (see~\cite{Warner:65}). Unfortunately, it was shown in~\cite{Sheffet:18} that the sample complexity of any test relying on this scheme is $\Theta(\ab^{5/2}/\dst^2\priv^2)$, far from optimal. Our first result considers the now well established privatization scheme \Rappor (Randomized Aggregatable Privacy-Preserving Ordinal Response~\cite{ErlingssonPK14, KairouzBR16}) (see~\ref{ssec:identity:rappor}). In~\cref{theo:identity:private:rappor}, we design an identity testing algorithm that, given samples from the \Rappor mechanism, has sample complexity $\bigO{\frac{\ab^{3/2}}{\dst^2\priv^2}}$---a factor $\ab$ improvement over randomized response. 

The \Rappor mechanism produces privatized messages with $\Omega(\ab)$ bits of entropy, and as a result those messages are $\ab$-bit long. Thus, \Rappor requires a large communication bandwidth. We provide a new mechanism based on the recently proposed Hadamard Response (HR) that produces only one-bit messages, leading to an identity testing algorithm with the same sample complexity $\bigO{\frac{\ab^{3/2}}{\dst^2\priv^2}}$. This result is given in~\cref{theo:identity:private:onebit}. 

All the schemes above require no publicly agreed upon randomness, which we refer to as \emph{private-coin} mechanisms, and are highly desirable when it is too inefficient or infeasible to setup a common random seed.
However, in~\cite{AcharyaCT:IT1}, it was established that any testing algorithm based on any private-coin $\priv$-LDP mechanism must use $\Omega\left(\frac{\ab^{3/2}}{\dst^2\priv^2}\right)$ players. Therefore, the algorithms we propose based on \Rappor and HR are the best possible, and more significantly, are optimal among all LDP schemes that do not use public randomness. 

This raises the question of building LDP mechanisms that do use public randomness, which we refer to as \emph{public-coin} mechanisms, and post-processing algorithms that require fewer samples.
We emphasize that the public randomness is used only for added utility and we require the same strong privacy guarantees.
In this context, we design a new public-coin $\priv$-LDP mechanism and a corresponding algorithm whose sample complexity is $\bigO{\frac{\ab}{\dst^2\priv^2}}$, a factor $\sqrt{\ab}$ improvement over the best possible without using public randomness. Furthermore, this is asymptotically optimal from the result of~\cite{AcharyaCT:IT1}, and the mechanism only uses one bit of communication from each player, making it as communication-efficient as possible. Our result relies on a randomized one-bit isometry, where the players use the common random seed to randomly project the original domain $[\ab]$ to a binary domain and perform testing over this new domain. This result is given in~\cref{theo:identity:public:onebit:restated}.

We then turn to the task of independence testing. Here, the underlying distribution $\p$ is over the product domain $[\ab]\times[\ab]$, and the goal is to test whether the marginals of $\p$ are independent (\ie{} if $\p$ is a product distribution) or at least $\dst$ away from all product distributions. We design schemes without and with public randomness which achieve sample complexity $\bigO{\frac{\ab^{3}}{\dst^2\priv^2}}$ and $\bigO{\frac{\ab^{2}}{\dst^2\priv^2}}$, respectively. These results are given in~\cref{theo:independence:private} and~\cref{theo:independence:public:onebit:restated}. Interestingly, in the case where public randomness is available, our protocol relies on a one-bit isometry similar to the one used in the identity testing case, but suitably generalized to handle the product structure of the domain. 
Finally, we prove the optimality of both these bounds, establishing matching lower bounds in~\cref{corollary:independence:uniformity:reduction}. This is done by providing a formal reduction from independence testing over $[\ab]\times[\ab]$ to the identity testing problem over $[\ab^2]$. We believe this general reduction, which is not specific to the locally private setting, to be of independent interest.

The conceptual takeaway message of our results is that, for composite hypothesis testing problems, public randomness can prove very helpful, and its availability leads to significantly more sample-efficient protocols. 

We finally remark that although this work is concerned with noninteractive protocols, more complicated \emph{adaptive} LDP schemes are possible where the players sequentially choose their privatization schemes upon observing the messages of all previous players and the available public randomness. Recent works in this setting~\cite{BB:20, AJM:20, ACLST:20} show that, for identity testing, adaptivity does not allow for more efficient protocols than public randomness, and by our reduction for independence testing this carries over to the independence testing problem as well. 

\subsection{Related prior work}
Testing properties of distributions from their samples has a long history in statistics, which dates back more than a century. Recently, this problem has gathered renewed interest in the computer science community, with a particular focus on the study of discrete distributions in the finite-sample regime. In this section, we only focus on closely related papers and we refer an interested reader to surveys and books~\cite{Rubinfeld:12,Canonne:15,Goldreich:17,BW:17} for a comprehensive treatment. 

Following a long line of work, the optimal sample complexity for identity testing has been established as $\Theta(\ab^{1/2}/\dst^2)$~\cite{Paninski:08, Goldreich:16, ValiantV17a} under constant error probability. \cite{HuangM13,	DGPP:18} establish the optimal dependence on the error probability. \cite{ValiantV17a,BCG:19} also study the ``instance-optimal'' variant of the problem, introduced in~\cite{ValiantV17a}. The optimal sample complexity for the independence testing problem where both observations are from the same set $[\ab]$ was studied in~\cite{BFFKRW:01,LRR:13}, and shown to be $\Theta(\ab/\dst^2)$ in~\cite{AcharyaDK15,DK:16}.

Distribution testing has also been studied under privacy constraints on the samples. Under the notion of (global) {\em differential privacy} (DP)~\cite{DMNS:06}, 
identity testing has been considered in~\cite{CDK:17,
	ADR:17}, with a complete characterization of the sample complexity
derived in~\cite{ASZ:18:DP}. \cite{CKMUZ:19} focuses on the class of product distributions in high dimensions, including product of Bernoulli's and Gaussians with known variances. Both these works show that, in certain parameter regimes, the sample complexity can match the sample complexity of the non-private counterpart of the problem, which is in sharp contrast to the more stringent case of \emph{local} privacy (LDP) considered in this paper. Finally, \cite{AJM:20} and~\cite{BCJM:20} consider uniformity testing (a specific case of identity testing) under the notions of {\em pan-privacy} and \emph{shuffle privacy}, respectively, which provide privacy guarantees in-between DP and LDP. 

Independence testing under {\em differentially privacy} has been studied in~\cite{GaboardiLRV:16, KiferR:17, WangLK:15} and the first algorithm with finite sample guarantee was given in~\cite{ADKR:19}.

The works most closely related to ours are those that consider distribution testing under LDP constraints~\cite{Sheffet:18, GR:18, ACT:19, ACFT:19, ACHST:20, AJM:20, ACLST:20, BB:20}. \cite{Sheffet:18} considers both identity testing and independence testing with private-coin, noninteractive schemes. Our results improve upon theirs by a factor of $\ab$ and $\ab^2$, respectively. \cite{ACT:19} establishes lower bounds for identity testing using both private-coin and public-coin noninteractive schemes, which match our bounds in both cases and imply the optimality of our results. \cite{ACHST:20} considers noninteractive schemes where only a limited amount of public randomness is available, and obtains the optimal sample complexity which interpolates smoothly between the private-coin and public-coin cases. \cite{AJM:20, ACLST:20, BB:20} consider identity testing using sequentially interactive schemes, which combined with our results prove that interactivity cannot lead to an improvement in the sample complexity over public-coin noninteractive schemes. We note that the recent work of Joseph et al.~\cite{JMNR:19} also considers the role of interactivity in LDP hypothesis testing; however, they focus on simple hypothesis testing (as well as a generalization to \emph{convex} hypothesis classes). Their results do not apply to identity testing, and are incomparable to ours.

Another class of problems of statistical inference, density estimation, requires learning
the unknown distribution up to a desired accuracy of $\dst$ in total
variation distance. The optimal sample complexity of 
locally private learning discrete $\ab$-ary distributions is known to
be $\Theta(\ab^2/(\dst^2\priv^2))$; see~\cite{DJW:13, ErlingssonPK14, YeB17, KairouzBR16,
	ASZ:18:HR, AcharyaS:19}. The private-coin identity testing schemes in this paper are based on the same LDP randomization schemes proposed in these papers at the user side. Specifically, \Rappor was independently proposed in~\cite{ErlingssonPK14, DJW:13} and analyzed in~\cite{KairouzBR16}. Hadamard Response and its one-bit variant are proposed in~\cite{ASZ:18:HR, AcharyaS:19}. 
Also,
\cite{BNST:17} uses Hadamard transform together with sampling to reduce user communication to $O(1)$ bits in a public-coin scheme. Moreover, our private-coin independence testing protocol also involves a step that learns both marginal distributions, which relies on the scheme from~\cite{AcharyaS:19}.

\subsection{Organization}
The rest of the paper is organized as follows. In~\cref{ssec:identity:rappor,ssec:identity:hadamard} we provide two private-coin LDP schemes for identity testing based on \Rappor and Hadamard Response respectively, and analyze their sample complexity. In~\cref{sec:identity-public} we establish an upper bound on the  sample complexity of public-coin protocols for identity testing. In~\cref{sec:independence-private,sec:idependence-public} we establish the upper bounds on private-coin and public-coin independence testing, respectively. Finally, in~\cref{sec:lower-independence} we provide a reduction between identity and independence testing and use it to prove the optimality of the proposed independence tests both for private- and public-coin protocols.

\section{The setup: local privacy and inference protocols}\label{sec:setup}
\subsection{Notation}
Throughout the paper, we denote by  $\log$ the natural logarithm and $\log_2$ the base $2$ logarithm. We use standard asymptotic
notation $\bigO{\cdot}$, $\bigOmega{\cdot}$, and $\bigTheta{\cdot}$ for complexity orders.\footnote{Namely, for two non-negative sequences $(a_n)_n$ and $(b_n)_n$, we write $a_n = O(b_n)$ (resp., $a_n = \Omega(b_n)$) if there exist $C>0$ and $N\geq 0$ such that $a_n \leq Cb_n$ (resp., $a_n \geq Cb_n$) for all $n\geq N$. Further, we write $a_n = \Theta(b_n)$ when both $a_n = O(b_n)$ and $a_n = \Omega(b_n)$ hold.} 
 
For a known and fixed discrete domain $\domain$ let $\distribs{\domain}$ be the set of probability distributions over $\domain$, \ie,
\[ 
\distribs{\domain} = \setOfSuchThat{ \p\colon\domain\to[0,1] }{ \normone{\p}=1 },  
\]
where we identify a probability distribution to its probability mass function. We denote by $\uniformOn{\domain}$ the
uniform distribution on $\domain$ and omit the subscript when the
domain is clear from context. 

We are mostly interested in $\ab$-ary discrete distributions, and assume without loss of generality that $\domain=[\ab]\eqdef \{1,2,\dots,\ab\}$. We use $\distribs{[\ab]}$ and $\distribs{\ab}$ interchangeably to denote the probability simplex consisting of all distributions over $[\ab]$.

The \emph{total variation distance} between distributions $\p,\q\in\distribs{\domain}$ is 
\begin{equation*}
\totalvardist{\p}{\q} \eqdef \sup_{S\subseteq\domain} \left(\p(S)-\q(S)\right)
= \frac{1}{2} \sum_{x\in\domain} \abs{\p(x)-\q(x)},
\end{equation*}
namely, $\totalvardist{\p}{\q}$ is equal to half of the $\lp[1]$ distance
of $\p$ and $\q$.  For a distance parameter $\dst\in(0,1]$,
  we say that $\p,\q\in\distribs{\domain}$ are \emph{$\dst$-far} if  $\totalvardist{\p}{\q}>\dst$.
  Finally, for two distributions $\p_1$ and $\p_2$ over $\domain$, we denote by $\p_1\otimes\p_2$ the product
  distribution over $\domain\times\domain$ defined by
  $(\p_1\otimes\p_2)(x_1,x_2)=\p_1(x_1) \cdot \p_2(x_2)$ for all $x_1,x_2\in\domain$.

\subsection{Local differential privacy and protocols}

For a data domain $\domain$ and some set $\cY$ (which denotes the message set), a channel $W\colon\cX\to\cY$ is \emph{$\priv$-locally differentially private}
($\priv$-LDP) \emph{mechanism} if~\cite{EvfimievskiGS:03,DMNS:06,KLNRS:11}
\begin{equation}
  \label{def:ldp:channel}
    \max_{y\in \cY}\max_{x,x'\in \cX}\frac{W(y\mid x')}{W(y\mid x)} \leq e^{\priv}\,.   
\end{equation}
where, slightly overloading notation, we write $W(\cdot \mid x)$ for the output distribution (on $\cY$) for input $x\in\cX$. Loosely speaking, no output message from a user can reveal too much about their sample. Let $\cW_\priv$ be the set of all $\priv$-LDP channels with output $\{0,1\}^\ast$ the set of all binary strings. 

Our setup is depicted in \cref{fig:model}. There are $\ns$ independent samples $X^\ns\eqdef X_1,\ldots, X_{\ns}$ from an unknown distribution $\p$ distributed across $\ns$ players, with player $i$ holding $X_i$. Player $i$ passes $X_i$ through a privatization channel $W_i\in\cW_\priv$ and the output $Y_i$ is their message. Note that once the channel $W_i$ is fixed the output distribution of messages is only a function of $X_i$. We now describe the various communication protocols which restrict how the choice of $W_i$s can be performed. 

We restrict ourselves to \emph{simultaneous message passing} (SMP) protocols of communication, \ie{} noninteractive LDP mechanisms, where the $W_i$s are all selected simultaneously. %
Within SMP protocols, we distinguish between the case where a common random seed (public randomness) is available across players and can be used by them to select the $W_i$s, and the case there is no public randomness available and they must choose the $W_i$s independently. In both cases, however, the players are assumed to have access to private randomness, which is needed to implement any privatization mechanism. We describe these two cases in more detail below.

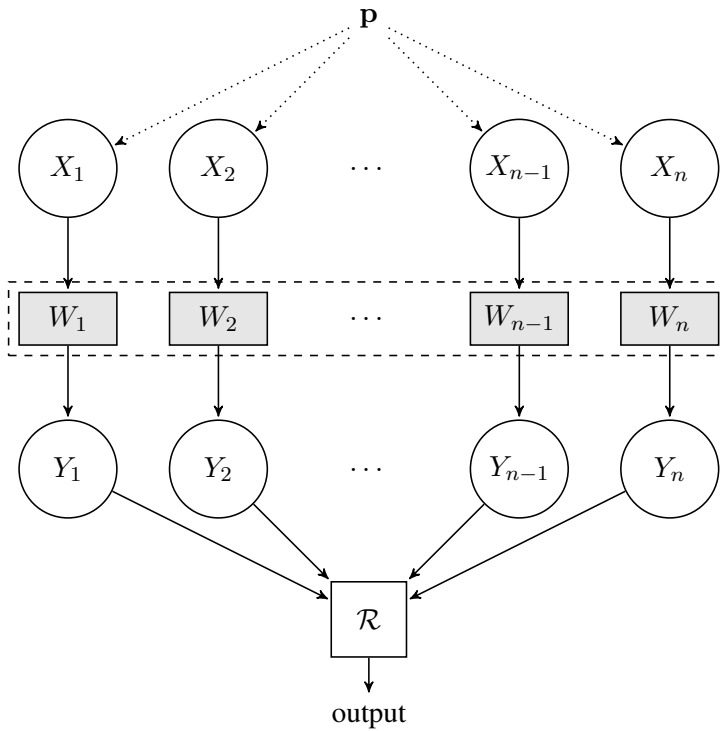
\begin{figure}[ht]\centering
\caption{\label{fig:model}The locally private distributed model, where each $Y_i\in\cY$. In the private-coin setting the channels $W_1,\dots,W_\ns$ are independent, while in the public-coin setting they are jointly randomized.}
\begin{tikzpicture}[->,>=stealth',shorten >=1pt,auto,node distance=20mm, semithick]
\label{tabll}  \node[circle,draw,minimum size=13mm] (A)
  {$X_1$}; \node[circle,draw,minimum size=13mm] (B) [right of=A]
  {$X_2$}; \node (C) [right of=B] {$\dots$}; \node[circle,draw,minimum
  size=13mm] (D) [right of=C] {$X_{\ns-1}$}; \node[circle,draw,minimum
  size=13mm] (E) [right of=D] {$X_\ns$};
  
  \node[rectangle,draw,minimum width=13mm,minimum
  height=7mm,fill=gray!20!white] (WA) [below of=A]
  {$W_1$}; \node[rectangle,draw,minimum width=13mm,minimum
  height=7mm,fill=gray!20!white] (WB) [below of=B] {$W_2$}; \node (WC)
  [below of=C] {$\dots$}; \node[rectangle,draw,minimum
  width=13mm,minimum height=7mm,fill=gray!20!white] (WD) [below of=D]
  {$W_{\ns-1}$}; \node[rectangle,draw,minimum width=13mm,minimum
  height=7mm,fill=gray!20!white] (WE) [below of=E] {$W_\ns$};
  
  \node[draw,dashed,fit=(WA) (WB) (WC) (WD) (WE)] {};
  o \node[circle,draw,minimum size=13mm] (YA) [below of=WA]
  {$Y_1$}; \node[circle,draw,minimum size=13mm] (YB) [below of=WB]
  {$Y_2$}; \node (YC) [below of=WC]
  {$\dots$}; \node[circle,draw,minimum size=13mm] (YD) [below of=WD]
  {$Y_{\ns-1}$}; \node[circle,draw,minimum size=13mm] (YE) [below
  of=WE] {$Y_\ns$};

  \node (P) [above of=C] {$\p$}; \node[rectangle,draw, minimum
  size=10mm] (R) [below of=YC] {$\referee$}; \node (out) [below
  of=R,node distance=13mm] {output};

  \draw[->] (P) edge[dotted] (A)(A) edge (WA)(WA) edge (YA)(YA) edge
  (R); \draw[->] (P) edge[dotted] (B)(B) edge (WB)(WB) edge (YB)(YB)
  edge (R); \draw[->] (P) edge[dotted] (D)(D) edge (WD)(WD) edge
  (YD)(YD) edge (R); \draw[->] (P) edge[dotted] (E)(E) edge (WE)(WE)
  edge (YE)(YE) edge (R); \draw[->] (R) edge (out);
\end{tikzpicture}
\end{figure}

\begin{definition}[Private-coin SMP Protocols]\label{d:private-prot}     
Let $U_1, \dots, U_\ns$ be independent random variables, which are also independent jointly of $(X_1, \dots, X_\ns)$. $U_i$ is the private randomness available to player $i$. %
 A $\priv$-LDP \emph{private-coin} SMP protocol $\pi$ consists of the following two steps: (a)~Player $i$ selects their channel $W_i\in\cW_\priv$ (possibly as a function of $U_i$), (b)~and sends their message $Y_i\in\cY$, which is obtained by passing $X_i$ through $W_i$, to the referee. The referee receives the messages $(Y_1, \dots, Y_\ns)\eqdef \pi(X^\ns)$. We assume that the protocol is decided ahead of time, so that the distribution of the $U_i$ is known to the referee, but not their instantiation.
\end{definition}

Since the random variables $X_i$ and $U_i$ are independent across players, and the message $Y_i$ from player $i$ is a randomized function of $(X_i, U_i)$ the messages $(Y_1,\dots,Y_\ns)$ are all independent across players. 

\begin{definition}[Public-coin SMP  Protocols]
In addition to the private randomness $U_1, \dots, U_\ns$ at the players as above, %
let $V$ be a random variable jointly independent of the the random variables $X_i$ and $U_i$, which denotes the public randomness and is available to all players. A $\priv$-LDP \emph{public-coin} SMP protocol $\pi$ consists of the following two steps: (a)~Player $i$ selects their channel $W_i\in\cW_\priv$ as a function of $V$ (and possibly of $U_i$), and (b)~sends their messages $Y_i\in\cY$,  by passing $X_i$ through $W_i$, to the referee. The referee receives the messages $(Y_1, \dots, Y_\ns)\eqdef \pi(X^\ns, V)$ and the public randomness $V$, but does not have access to the private randomness $(U_1,\ldots, U_\ns)$ of the players.
\end{definition}
 In contrast to private-coin protocols, in a public-coin SMP protocol, the message $Y_i$ from player $i$ is a function of $V$ as well as $(X_i, U_i)$, so the resulting messages $Y_i$ are not independent. They are, however, independent conditioned on the shared randomness $V$. %

We emphasize that private randomness is available even in the public-coin setting and, as previously mentioned, is \emph{required} in order for the protocol to satisfy local privacy (\cf~\cref{def:ldp:channel}).
 This is because the channels must satisfy the LDP condition even when all the information available to the referee, including the public randomness $V$, is fully ``leaked.''
\subsection{Distributed inference protocols}
  
We now provide the formal description of the distributed inference tasks considered
in this work, identity and independence testing.

\paragraph{Identity Testing} Let  $\q\in \distribs{\ab}$ be a known reference distribution. In the $(\ab, \dst, \delta)$-identity testing
problem, we seek to use $\ns$ i.i.d.\ samples from an unknown $\p\in\distribs{\ab}$ to test if $\p$ equals $\q$ or if it is $\dst$-far from $\q$ in total variation distance. A private-coin (resp., public-coin) \emph{$\priv$-LDP protocol for $(\ab,\dst,\delta)$-identity testing} then consists of a private-coin (resp. public-coin) $\priv$-LDP protocol $\pi$ along with a (randomized) mapping $\Tester\colon\cY^\ns\to\{0,1\}$ such that
\begin{align*}
\probaDistrOf{X^\ns\sim \p^\ns}{\Tester(\pi(X^\ns))=1}>1-\delta, &\text{ if
} \p=\q,\\ \probaDistrOf{X^\ns\sim \p^\ns}
             {{\Tester(\pi(X^\ns))=0}}>1-\delta, &\text{ if }
             \totalvardist{\p}{\q}>\dst.
\end{align*} 
Namely, after running the protocol $\pi$ on the independent samples $X^\ns$ held by the players, the referee applies the mapping $\Tester$ to the resulting messages $(Y_1,\ldots, Y_\ns)=\pi(X^\ns)$, which should ``accept'' with high constant probability if the samples come
from the reference distribution $\q$ and ``reject'' with high constant
probability if they come from a distribution significantly far from
$\q$. The special case of identity testing for $\uniformOn{\ab}$ is termed the $(\ab,\dst,\delta)$-\emph{uniformity testing} problem.

The sample complexity of private-coin (resp. public-coin) $\priv$-LDP $(\ab, \dst,\delta)$-identity testing is the minimum $\ns$ for which a $\priv$-LDP protocol for $(\ab,\dst,\delta)$-identity testing with $\ns$ players exists for $\q$. While this quantity can depend on the reference distribution $\q$, it is customary to consider sample complexity over the worst-case
$\q$.\footnote{The sample complexity for a fixed $\q$, without privacy constraints, has been studied
  under the ``instance-optimal'' setting (see~\cite{ValiantV17a, BCG:17}); and under local privacy constraints by~\cite{BB:20}. See also~\cref{sec:identity:testing} for a discussion of the relation between worst-case and instance-optimal settings.}
  
  \paragraph{Independence Testing} In the $(\ab, \dst, \delta)$-independence testing
problem, we seek to use samples from an unknown $\p\in\distribs{[\ab]\times[\ab]}$ (with unknown marginals $\p_1,\p_2\in\distribs{\ab}$) to test if $\p$ equals $\p_1\otimes\p_2$ or if it is $\dst$-far from \emph{every} product distribution in total variation distance. A private-coin (resp., public-coin) \emph{$\priv$-LDP protocol for $(\ab,\dst,\delta)$-independence testing} then consists of a private-coin (resp. public-coin) $\priv$-LDP protocol $\pi$ along with a (randomized) mapping $\Tester\colon\cY^\ns\to\{0,1\}$ such that
\begin{align*}
\probaDistrOf{X^\ns\sim \p^\ns}{\Tester(\pi(X^\ns))=1}>1-\delta, &\text{ if
} \p=\p_1\otimes\p_2,\\ \probaDistrOf{X^\ns\sim \p^\ns}
             {{\Tester(\pi(X^\ns))=0}}>1-\delta, &\text{ if }
             \inf_{\q_1,\q_2\in\distribs{\ab}}\totalvardist{\p}{\q_1\otimes\q_2}>\dst.
\end{align*} 
The sample complexity of private-coin (resp. public-coin) $\priv$-LDP $(\ab, \dst,\delta)$-independence testing is the minimum $\ns$ for which a $\priv$-LDP protocol for $(\ab,\dst,\delta)$-independence testing with $\ns$ players exists for $\q$.

\begin{remark}
We note that the formulation above can be generalized to testing independence over $[\ab_1]\times[\ab_2]$ for arbitrary $\ab_1,\ab_2$, or even over general discrete product spaces $[\ab_1]\times\cdots\times[\ab_d]$. Some of our protocols may generalize to these more general settings, but for simplicity with focus on the simple and arguably fundamental case of independence over $[\ab]\times[\ab]$.
\end{remark}

\section{Locally private identity testing}\label{sec:identity:testing}
We begin by recalling lower bounds from~\cite{AcharyaCT:IT1} which show that for $\priv\in [0,1)$, a private-coin protocol $\priv$-LDP identity testing protocol requires at least $\Omega(\ab^{3/2}/\priv^2\dst^2)$ players and a public-coin protocol requires at least $\Omega(\ab/\priv^2\dst^2)$ players. In this section, we propose both private- and public-coin protocols that attain these bounds, establishing a strict separation between the sample complexity of private- and public-coin protocols. In addition, we give protocols with optimal sample complexity for both settings that require only $1$ bit of communication per player.

\new{We note that our upper bounds for identity are phrased in terms of the domain size $\ab$, or, equivalently, as a worst-case among all possible reference distributions $\q$. However, they immediately imply more refined bounds parameterized by a functional of the reference $\q$ itself (\ie{} ``instance-optimal'' bounds, to follow~\cite{ValiantV17a}) \textit{via} the reduction described in~\cite[Appendix~D]{ACT:19}.}

  \subsection{Private-coin protocols}

We now present private-coin protocols based on \Rappor and
Hadamard Response that are both sample-optimal, with different
communication requirements. 
\subsubsection{A mechanism  based on \Rappor}\label{ssec:identity:rappor} 
We begin by describing the \emph{randomized aggregatable
privacy-preserving ordinal response} (\Rappor) mechanism, which is a
$\priv$-LDP mechanism introduced in~\cite{ErlingssonPK14}.  Its
simplest implementation, $\ab$-\Rappor, maps $\cX=[\ab]$ to 
$\cY=\{0,1\}^\ab$ in two steps.  First, ``one-hot encoding'' is applied
to the input $x\in[\ab]$ to obtain the vector $y'\in\{0,1\}^\ab$ such
that $y'_{j}=\indic{x=j}$ for all $j\in\cX$.  The privatized output
$y \in \cY$ of $\ab$-\Rappor is then a $\ab$-bit vector obtained by
flipping each bit of $y'$ independently with probability
$\frac{1}{e^{\priv/2}+1}$.

Note that if $X$ is drawn from $\p\in\distribs{\ab}$, this leads to
$Y\in\{0,1\}^\ab$ such that the coordinates are (correlated) Bernoulli
random variables, with $Y_j$ distributed as
$\bernoulli{\alprappor\cdot\p(j)+\betrappor}$, $j\in [\ab]$, with
$\alprappor, \betrappor$ defined as
\begin{equation}\label{eq:rappor:parameters}
    \alprappor \eqdef \frac{e^{\priv/2}-1}{e^{\priv/2}+1}
    = \frac{\priv}{4}+o(\priv), \qquad \betrappor \eqdef \frac{1}{e^{\priv/2}+1}
    = \frac{1}{2} + o(\priv).
\end{equation}

Given $\ns$ independent samples from $\p$, let the output of \Rappor
applied to these samples be denoted by $Y_1, \ldots,
Y_\ns\in\{0,1\}^\ab$, where $Y_i = (Y_{i1}, \ldots, Y_{i\ab})$ for
$i\in[\ns]$. The following fact is a simple consequence of the
definition of \Rappor.
\begin{fact}\label{fact:rappor:statistics}
Let $i,j\in[\ns]$, and $x,y\in[\ab]$.
\[
    \probaOf{Y_{ix} =1, Y_{jy}=1} = \begin{cases}
    (\alprappor\p(x)+\betrappor)(\alprappor\p(y)+\betrappor),& \text{
    if } i\neq j\\
    (\alprappor\p(x)+\betrappor)(\alprappor\p(y)+\betrappor)-\alprappor^2\p(x)\p(y),
    & \text{ if } i=j,\, x\neq y\\ \alprappor \p(x) + \betrappor,
    & \text{ if } i=j,\, x= y,\\ \end{cases}
\]
where $\alprappor,\betrappor$ are defined as
in~\eqref{eq:rappor:parameters}. \new{Note that vectors $Y_{i}$ and
$Y_j$ are independent for distinct $i,j\in[\ns]$.}
\end{fact}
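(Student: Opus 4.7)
The plan is to verify the three cases by a direct computation from the definition of $\ab$-\Rappor. The key structural observation is that the privatization is applied coordinatewise with independent bit-flip noise, so conditional on the underlying sample $X_i$, the bits $Y_{i1},\ldots,Y_{i\ab}$ are mutually independent Bernoullis whose success probability depends only on whether the coordinate equals $X_i$. I will use throughout the identities $1-\betrappor=\alprappor+\betrappor=\frac{e^{\priv/2}}{e^{\priv/2}+1}$ and $1-2\betrappor=\alprappor$, which follow immediately from~\eqref{eq:rappor:parameters}.

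For the case $i\neq j$, distinct players hold independent samples $X_i,X_j$ and apply independent noise, so $Y_i$ and $Y_j$ are independent vectors; hence $\probaOf{Y_{ix}=1,Y_{jy}=1}$ factors as the product of marginals. It thus suffices to compute $\probaOf{Y_{ix}=1}$, which I would obtain by conditioning on $X_i$: the event has conditional probability $1-\betrappor$ when $X_i=x$ and $\betrappor$ when $X_i\neq x$. Averaging over $X_i\sim\p$ and using $1-2\betrappor=\alprappor$ gives the claimed formula $\alprappor\p(x)+\betrappor$. This marginal computation also immediately yields the diagonal case $i=j$, $x=y$.

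For the remaining case $i=j$, $x\neq y$, I would again condition on $X_i$ and exploit the coordinatewise independence of the noise at the two distinct coordinates $x$ and $y$. Three subcases appear: if $X_i=x$, the conditional joint probability equals $(1-\betrappor)\betrappor$; if $X_i=y$, it equals $\betrappor(1-\betrappor)$; and if $X_i\notin\{x,y\}$, it equals $\betrappor^2$. Taking expectation under $X_i\sim\p$ with weights $\p(x)$, $\p(y)$, and $1-\p(x)-\p(y)$, and simplifying using $1-\betrappor=\alprappor+\betrappor$, I obtain $\alprappor\betrappor(\p(x)+\p(y))+\betrappor^2$. Expanding $(\alprappor\p(x)+\betrappor)(\alprappor\p(y)+\betrappor)-\alprappor^2\p(x)\p(y)$ yields the same expression, completing the identification. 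The only ``obstacle'' here is purely bookkeeping: keeping track of which bits are pre-noise $0$ versus $1$ when conditioning on $X_i$, and then matching the resulting expression to the stated form via the algebraic identity $(\alprappor\p(x)+\betrappor)(\alprappor\p(y)+\betrappor)=\alprappor^2\p(x)\p(y)+\alprappor\betrappor(\p(x)+\p(y))+\betrappor^2$. The independence of $Y_i$ and $Y_j$ for $i\neq j$ asserted in the last sentence of the fact follows from the same reasoning used in the first case.
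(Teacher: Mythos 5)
Your proof is correct, and since the paper simply asserts this fact as "a simple consequence of the definition of \Rappor" without giving details, your computation is exactly the intended (and essentially unique) argument: condition on $X_i$, use the coordinatewise independence of the bit flips, and simplify with $1-2\betrappor = \alprappor$. The algebra in all three cases checks out, including the identification $\alprappor\betrappor(\p(x)+\p(y))+\betrappor^2 = (\alprappor\p(x)+\betrappor)(\alprappor\p(y)+\betrappor)-\alprappor^2\p(x)\p(y)$ in the $i=j$, $x\neq y$ case.
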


We now propose our testing mechanism based on \Rappor, which, in
essence, uses a privatized version of a $\chi^2$-type statistic
of~\cite{ChanDVV14,AcharyaDK15,ValiantV17a}. We note that our
choice of using such a $\chi^2$-type statistic instead of a (perhaps
more natural) ``collision-based'' unbiased estimator for
$\normtwo{\p}^2$ stems from the fact the latter has a high
variance,
leading to a suboptimal sample complexity. For $x\in[\ab]$, let the 
number of occurrences of $x$ among the $\ns$ (privatized) outputs
of \Rappor be 
\begin{equation}\label{eq:rappor:nx}
    N_x \eqdef \sum_{j=1}^\ns \indic{Y_{jx}=1},
\end{equation}
which by the definition of \Rappor follows a
$\binomial{\ns}{\alprappor\p(x)+\betrappor}$ distribution. We consider the following
test statistic $T$:
\begin{equation}\label{eq:rappor:z}
    T \eqdef \sum_{x\in[\ab]} \Paren{ \left(   N_x-(\ns-1)\left(\alprappor\q(x)+\betrappor\right) \right)^2 - N_x + (\ns-1)\left(\alprappor\q(x)+\betrappor\right)^2 }.
\end{equation}
This statistic is motivated from the fact that it constitutes an unbiased
estimator of the the squared $\lp[2]$ distance between $\p$ and $\q$. Using this property
we threshold $T$ to test whether $\p=\q$ or not.
Keeping in mind that $N_x$ is
typically concentrated around its expected value of roughly
$\ns/2$, our new statistic can be seen to take the form
\[
    T \approx \sum_{x\in[\ab]} \Paren{ N_x^2 - \ns N_x }
    + \Theta(\ab\ns^2),
\]
since $\betrappor \approx 1/2$.
In particular,  the subtracted linear
term reduces the fluctuation of the quadratic part, bringing down the variance of the statistic.

\begin{algorithm}[ht]
  \begin{algorithmic}[1]
  \Require Privacy parameter $\priv>0$, distance parameter $\dst\in(0,1)$, $\ns$ players
  \State Set
  \[
      \alprappor \gets \frac{e^{\priv/2}-1}{e^{\priv/2}+1}, \qquad \betrappor \gets \frac{1}{e^{\priv/2}+1}
  \]
  as in~\eqref{eq:rappor:parameters}.
  \State Player $i$ applies ($\priv$-LDP) \Rappor to $X_i$, sends result $Y_i \in \{0,1\}^\ab$ \Comment{Time $O(\ab)$ per user}
  \State Server computes $N_x$ for every $x\in[\ab]$, as defined in~\eqref{eq:rappor:nx} \Comment{Time $O(\ab \ns)$}
  \State Server computes $T$, as defined in~\eqref{eq:rappor:z} \Comment{Time $O(\ab)$}
  \If{$T<\ns(\ns-1)\alprappor^2\dst^2/\ab$}\label{alg:rappor:chisquare:threshold}
    \State \Return \accept
  \Else
    \State \Return \reject
  \EndIf
  \end{algorithmic}
  \caption{Locally Private Identity Testing using \Rappor} \label{alg:rappor:chisquare}
\end{algorithm}

This motivates our testing protocol,~\cref{alg:rappor:chisquare}, and
leads to the main result of this section below.
\begin{theorem}\label{theo:identity:private:rappor}
For every $\ab\geq 1$ and $\priv\in(0,1]$, there exists a private-coin $\priv$-LDP protocol for $(\ab,\dst,\delta)$-identity testing over~$[\ab]$ using \Rappor and $\ns=\bigO{\frac{\ab^{3/2}}{\dst^2\priv^2}\log\frac{1}{\delta}}$ players.
\end{theorem}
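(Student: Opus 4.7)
The plan is to analyze the statistic $T$ defined in~\eqref{eq:rappor:z} in four steps: (i)~the $\priv$-LDP guarantee, which is immediate since \Rappor is a standard $\priv$-LDP mechanism; (ii)~showing that $T$ is an unbiased estimator of $\ns(\ns-1)\alprappor^2\normtwo{\p-\q}^2$ with a favorable gap in expectation between $\mathcal{H}_0$ and $\mathcal{H}_1$; (iii)~bounding $\var(T)$ under both hypotheses; and (iv)~combining the above via Chebyshev's inequality, then amplifying confidence to $1-\delta$ by running $\bigO{\log(1/\delta)}$ independent copies of the tester and taking a majority vote.

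For step~(ii), I would rewrite each summand as $T_x = N_x(N_x-1) - 2(\ns-1)q_x N_x + \ns(\ns-1)q_x^2$, where $q_x \eqdef \alprappor\q(x) + \betrappor$ and $p_x \eqdef \alprappor\p(x) + \betrappor$. Using the binomial identity $\expect{N_x(N_x-1)} = \ns(\ns-1)p_x^2$ (which follows from \cref{fact:rappor:statistics}) directly yields $\expect{T_x} = \ns(\ns-1)(p_x-q_x)^2 = \ns(\ns-1)\alprappor^2(\p(x)-\q(x))^2$, whence $\expect{T} = \ns(\ns-1)\alprappor^2\normtwo{\p-\q}^2$. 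Under $\mathcal{H}_0$ this is zero, while under $\mathcal{H}_1$ the Cauchy--Schwarz bound $\normtwo{\p-\q}^2 \geq \normone{\p-\q}^2/\ab > 4\dst^2/\ab$ gives $\expect{T} > 4\ns(\ns-1)\alprappor^2\dst^2/\ab$, a factor $4$ above the threshold in~\cref{alg:rappor:chisquare:threshold}.

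The crux is controlling $\var(T) = \sum_x \var(T_x) + \sum_{x \neq y} \cov(T_x, T_y)$. The subtraction of the linear term $-N_x$ inside each $T_x$ is precisely what cancels the leading $\Theta(\ns^3)$ contribution of $\var(N_x^2)$ that would otherwise arise when $p_x \approx 1/2$ (as is the case here since $\betrappor \approx 1/2$): a standard fourth-moment computation on $N_x \sim \binomial{\ns}{p_x}$ gives $\var(T_x) = \bigO{\ns^2 + \ns^3(p_x - q_x)^2}$. For the cross-covariances, independence across distinct users reduces the problem to the within-user joint distribution, and the explicit identity $\cov(Y_{ix}, Y_{iy}) = -\alprappor^2 \p(x)\p(y)$ derived from \cref{fact:rappor:statistics} shows these are dominated by the diagonal terms. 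Summing yields $\var(T) = \bigO{\ab\ns^2 + \ns^3\alprappor^2\normtwo{\p-\q}^2}$.

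A Chebyshev argument now closes the loop: distinguishing $\mathcal{H}_0$ from $\mathcal{H}_1$ with constant probability requires $\sqrt{\var(T)}$ to be a small fraction of the mean gap $\ns(\ns-1)\alprappor^2\dst^2/\ab$. The $\ab\ns^2$ contribution gives the binding condition $\ns = \bigOmega{\ab^{3/2}/(\alprappor^2\dst^2)}$, and in this same regime the $\ns^3\alprappor^2\normtwo{\p-\q}^2$ contribution is dominated by the squared signal under $\mathcal{H}_1$. Substituting $\alprappor = \Theta(\priv)$ for $\priv \in (0,1]$ yields $\ns = \bigO{\ab^{3/2}/(\dst^2\priv^2)}$ for constant error, with the $\log(1/\delta)$ factor arising from the amplification step. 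The main obstacle will be the detailed variance bookkeeping, in particular verifying that the cross-covariances $\cov(T_x, T_y)$ are indeed of lower order once the within-user correlations from \cref{fact:rappor:statistics} are carefully tracked through the fourth-order products $N_x^2 N_y^2$.
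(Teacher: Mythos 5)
Your proposal is correct and follows essentially the same route as the paper: compute $\expect{T}=\ns(\ns-1)\alprappor^2\normtwo{\p-\q}^2$ using the binomial factorial-moment identity, bound $\var(T)$ by $\bigO{\ab\ns^2 + \ns^3\alprappor^2\normtwo{\p-\q}^2}$ via a term-by-term analysis that tracks the within-user covariances $\cov(Y_{ix},Y_{iy})=-\alprappor^2\p(x)\p(y)$ through the fourth-order products, apply Chebyshev with the threshold $\ns^2\alprappor^2\dst^2/\ab$, and amplify to error $\delta$ by majority over $\bigO{\log(1/\delta)}$ independent runs. Your rewriting of each summand as $N_x(N_x-1)-2(\ns-1)q_xN_x+\ns(\ns-1)q_x^2$ is a harmless algebraic reshuffling of the paper's $g(N_x,\lambda_x)$; the identified ``crux'' (the covariance bookkeeping in $\expect{N_x^2N_y^2}$ etc.) is exactly what the paper's Appendix handles.
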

\begin{proof}
Each player reports its data using \Rappor, which is a $\priv$-LDP mechanism.
Thus, we only need to analyze the error performance of the proposed test, which we do simply by using Chebyshev's
inequality. Towards that, we evaluate the expected value and the
variance of $T$.

The following evaluation of expected value of statistic $T$ uses a
simple calculation entailing moments of a Binomial random variable:
\begin{lemma}\label{lemma:identity:private:rappor:expect}
  For $T$ defined in~\eqref{eq:rappor:z}, we have
  \[
  \expect{T} = \ns(\ns-1)\alprappor^2\normtwo{\p-\q}^2,
  \] where the expectation is taken over the private coins used by \Rappor and the samples drawn from $\p$. In particular, (i)~if
    $\p=\q$, then $\expect{T}=0$; and (ii)~if
    $\totalvardist{\p}{\q}>\dst$, then $\expect{T}>4\ns(\ns-1)\frac{\alprappor^2\dst^2}{\ab}$.
\end{lemma}
\begin{proof}
    Letting $\lambda_x \eqdef \alprappor\q(x)+\betrappor$, $\mu_x \eqdef \alprappor\p(x)+\betrappor$ for $x\in[\ab]$, and
    using the fact that $N_x$ is Binomial with parameters $\ns$ and $\mu_x$, we
    have
    \begin{align*}
    \expect{T} &= \sum_{x\in[\ab]} \expect{ \left( N_x-(\ns-1)\lambda_x \right)^2 - N_x +  (\ns-1) \lambda_x^2}\\
    & = \sum_{x\in[\ab]} \left( \expect{ N^2_x - N_x}  - 2 (\ns - 1)\lambda_x \expect{N_x} + \left( (\ns-1)^2 + \ns - 1\right)\lambda_x^2 \right)\\
    & = \sum_{x\in[\ab]} \left( \ns(\ns-1)\mu_x^2  - 2 \ns (\ns - 1)\lambda_x \mu_x + \ns(\ns-1)\lambda_x^2 \right)\\
&= \sum_{x\in[\ab]} \ns(\ns-1)\Paren{ \lambda_x-\mu_x }^2\\
&= \ns(\ns-1) \alprappor^2 \sum_{x\in[\ab]}\Paren{ \p(x)-\q(x) }^2.
    \end{align*}
    Claim (i) is immediate; claim (ii) follows upon noting that $\totalvardist{\p}{\q} = \frac{1}{2}\normone{\p-\q} \leq \frac{\sqrt{\ab}}{2}\normtwo{\p-\q}$.
\end{proof}
Turning to the variance, we are able to obtain the following:
\begin{lemma}\label{lemma:identity:private:rappor:variance}
  For $T$ defined in~\eqref{eq:rappor:z}, we have
  \[
  \var[T]  \leq 2\ab\ns^2 + 5\ns^3 \alprappor^2 \normtwo{\p-\q}^2
  \leq 2\ab\ns^2 + 4\ns \expect{T}\,.
  \]
\end{lemma}
\noindent The proof of this lemma is  technical and relies on the analysis of the covariance of the random variables $(N_x)_{x\in[\ab]}$, in view of bounding quantities of the form $\cov(f(N_x), f(N_y))$. We defer the details to~\cref{app:variance:rappor}.

With these two lemmata, we are in a position to conclude the
argument. 

First, consider the case when $\p=\q$. In this case $\expect{T}=0$ and $\var[T]\leq 2\ab\ns^2$
  by~\cref{lemma:identity:private:rappor:expect,lemma:identity:private:rappor:variance}.
  Therefore, by
  Chebyshev's inequality we get
  \[
  \probaOf{ T \geq \ns^2\frac{\alprappor^2\dst^2}{\ab} } \leq \frac{ \ab^2\var[T] }{ \ns^4\alprappor^4\dst^4 } \leq \frac{2\ab^3}{\ns^2\alprappor^4\dst^4},
  \]
  which is at most $1/3$ for $\ns\geq \frac{3\ab^{3/2}}{\alprappor^2\dst^2}$.  
 
Next, when $\totalvardist{\p}{\q}>\dst$, we get
\begin{align*}
\expect{T}&%
 > 4\frac{\ns(\ns-1)}{\ab}\alprappor^2\dst^2,\\
\var[T]&\leq 2\ab\ns^2 + 4\ns\expect{T}.
\end{align*}
Using Chebyshev's inequality yields
\[
 \probaOf{ T
  < \ns^2\frac{\alprappor^2\dst^2}{\ab} } \leq \probaOf{ T
  < \frac{1}{2}\expect{T} } \leq \frac{ 4\var[T] }{ \expect{T}^2
  } 
\leq \frac{\ab^3}{2(\ns-1)^2\alprappor^4\dst^4}
  + \frac{4\ab}{(\ns-1)\alprappor^2\dst^2}, 
  \] 
  which is at most $1/3$ for $\ns\geq \frac{9\ab^{3/2}}{\alprappor^2\dst^2}+1$
  and $\ab\geq 2$. Recalling that $\alprappor = \Theta(\priv)$ concludes the proof of~\cref{theo:identity:private:rappor}, for probability of error $\delta$ set to $1/3$.

Finally, we can reduce this probability of error to an arbitrary $\delta>0$, at the cost of a $O(\log(1/\delta))$ factor in the number of players, using a standard ``amplification'' argument: repeat independently the protocol on $O(\log(1/\delta))$ disjoint sets of players and taking the majority output.
\end{proof}

\subsubsection{A mechanism based on Hadamard Response}\label{ssec:identity:hadamard}
While sample-optimal among private-coin protocols,~\cref{alg:rappor:chisquare} requires each player to communicate $\ab$ bits. We now present a private-coin protocol that is sample-optimal and requires only $1$ bit of communication per player.

Since we seek to send only a $1$-bit message per player, each player can simply indicate if its observation lies in a subset or not. To make this communication LDP, we flip this bit with appropriate probability. In fact, we divide $\ns$ players into $K$ subgroups and associate a subset $C_j\subset[\ab]$, $1\leq j \leq K$, with the $j$th subgroup. Thus, the bits received at the referee can be viewed as $\ns/K$ independent samples from a product-Bernoulli distribution on $\{0,1\}^K$.\footnote{A $K$-dimensional product-Bernoulli distribution is a distribution over $\{0, 1\}^K$, whose coordinates are independently distributed.}

Suppose that  the mean $\mu(\p)$
of the resulting product-Bernoulli distribution satisfies 
 $\normtwo{\mu(\q)-\mu(\p)}>\alpha$ if $\totalvardist{\p}{\q}\geq \dst$.
Then, we can use a test for mean of product-Bernoulli distributions (see, for instance,~\cite[Section~2.1]{CDKS:17} or~\cite[Lemma 4.2]{CKMUZ:19})
to determine if the mean is $\mu(\q)$ or $\alpha$-far from $\mu(\q)$ in $\lp[2]$ distance.

The key question that remains is how large can $\alpha$ be. 
The answer to this question was provided in~\cite{AcharyaS:19},
which introduced the Hadamard Response (HR) mechanism that uses the Hadamard matrix to select $C_j$s that yield a large $\alpha$.

Formally, the HR mechanism can be described as follows. Let
		$K \eqdef 2^{ \clg{\log_2(\ab+1)} }$, 
	which is the smallest power of two larger than $\ab$, and let $H^{(K)}$ be the $K \times K$ Hadamard matrix.
Note that $K\le 2\ab$. 
Let $C_j$ be the location of $1$s in the $j$th column, $i.e.$, $C_j = \setOfSuchThat{i\in[K]}{H^{(K)}_{ij} = 1}$. For any distribution $\p$ over $[k]$ and $C \subset [K]$, let $\p(C)$ be the probability that a sample from $\p$ falls in set $C$. Here we assign zero probability to elements outside $[k]$. 
The key property of the sets $(C_1, \ldots, C_K)$, which was observed in~\cite{AcharyaS:19}, is the following.
		\begin{lemma} \label{lem:parseval}
			For any two distributions $\p, \q$ over $[k]$, 
			\[
				\sum_{j=1}^k (\p(C_j)	 - \q(C_j))^2 = \frac{K}{4}  \normtwo{\p-\q}^2.
			\]
		\end{lemma}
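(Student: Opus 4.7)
The plan is to reduce the identity to the Parseval/orthogonality property of the Hadamard matrix. First, I would extend $\p$ and $\q$ to probability vectors on $[K]$ by appending zeros in coordinates $k+1,\ldots,K$, and set $\r \eqdef \p-\q \in \R^K$. Note that $\sum_{i=1}^K \r(i)=0$, and that $\normtwo{\r}^2 = \normtwo{\p-\q}^2$ since the appended coordinates are zero. (I also suspect the upper limit of the sum in the statement is a typo for $K$; the identity is naturally proved with $j$ ranging over $[K]$, and this matches the way the lemma is used later in the paper.)

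The crucial observation is the indicator identity $\indic{i \in C_j} = \tfrac{1}{2}\bigl(1 + H^{(K)}_{ij}\bigr)$, which follows directly from the definition $C_j = \{\,i\in[K]\colon H^{(K)}_{ij}=1\,\}$ together with $H^{(K)}_{ij}\in\{\pm 1\}$. Using this, I would rewrite
\[
\p(C_j)-\q(C_j) \;=\; \sum_{i=1}^K \r(i)\,\indic{i\in C_j} \;=\; \tfrac{1}{2}\sum_{i=1}^K \r(i) \;+\; \tfrac{1}{2}\bigl(H^{(K)\top}\r\bigr)_{\!j} \;=\; \tfrac{1}{2}\bigl(H^{(K)\top}\r\bigr)_{\!j},
\]
where the first term vanishes because $\sum_i \r(i)=0$.

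Squaring and summing over $j\in[K]$, I would then invoke the orthogonality relation $H^{(K)\top}H^{(K)} = K\cdot I$ (columns of the Hadamard matrix are pairwise orthogonal with squared norm $K$) to obtain
\[
\sum_{j=1}^K \bigl(\p(C_j)-\q(C_j)\bigr)^2 \;=\; \tfrac{1}{4}\,\normtwo{H^{(K)\top}\r}^2 \;=\; \tfrac{1}{4}\,\r^\top H^{(K)} H^{(K)\top} \r \;=\; \tfrac{K}{4}\,\normtwo{\r}^2 \;=\; \tfrac{K}{4}\,\normtwo{\p-\q}^2,
\]
which is the desired identity.

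The argument is essentially a linear-algebraic one-liner once the Hadamard-indicator identity is spotted, so there is no genuine obstacle; the only subtlety worth double-checking is the zero-sum condition $\sum_i \r(i)=0$, which is what allows dropping the $\tfrac{1}{2}\sum_i \r(i)$ term and is precisely why the identity would fail if $\p,\q$ were replaced by arbitrary vectors rather than probability distributions.
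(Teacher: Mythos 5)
Your proof is correct and follows essentially the same route as the paper's: both exploit the representation $\p(C_j) = \tfrac{1}{2}\bigl(1 + (H^{(K)}\p_K)_j\bigr)$ (you derive it via the indicator identity; the paper writes it directly as $\p(C) = \tfrac{1}{2}(H^{(K)}\p_K + \mathbf{1}_K)$) and then invoke $(H^{(K)})^T H^{(K)} = K\,\mathbb{I}$. The only cosmetic difference is that the paper cancels the $\mathbf{1}_K$ term by subtracting the two affine expressions for $\p(C)$ and $\q(C)$, whereas you cancel it via $\sum_i r_i = 0$; these are the same observation. You are also right that the upper limit of the sum in the lemma statement is a typo for $K$ — the paper's own proof sums over all $K$ coordinates of $\p(C)-\q(C)$.
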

		\begin{proof}
			Let $\p_K, \q_K$ be $K$-dimensional probability vectors obtained by appending zeros to the end of $\p$ and $\q$, respectively, and
                        let $\p(C) := (\p(C_1), \p(C_2), \dots, \p(C_K))$. By the definition of $\p(C_j)$s, we have
			\[
				\p(C) = \frac{1}{2}\Paren{H^{(K)} \p_K + \mathbf{1}_K}, \,\,\, \q(C) = \frac{1}{2}\Paren{H^{(K)} \q_K + \mathbf{1}_K},
			\]
			where $\mathbf{1}_K$ is an all-one vector of dimension $K$. Hence by the fact that $(H^{(K)})^T	H^{(K)} = K\mathbb{I}$, we obtain
			\begin{align*}
				\sum_{j=1}^k (\p(C_j)	 - \q(C_j))^2 & = \normtwo{\p(C) - \q(C)}^2 = \frac{1}{4} (\p_K - \q_K)^T (H^{(K)})^T	H^{(K)}  (\p_K - \q_K) \\
				& = \frac{K}{4}  \normtwo{\p_K-\q_K}^2 = \frac{K}{4}  \normtwo{\p-\q}^2.\qedhere
			\end{align*}
		\end{proof}

In HR, a player observing $X \in [\ab]$ 
assigned a subset $C_j$ sends a random bit $B_{j}$
with distribution given by
		\begin{equation} \label{eqn:response}
\bPr{B_{j}=1|X}=
		  \begin{cases}
		    \frac{e^\priv}{e^\priv+1}, \text{ if } X \in C_j, \\
		    \frac{1}{e^\priv+1}, \text{ otherwise.} 
		  \end{cases}
		\end{equation}
Let $\mu(\p)$ denote the mean of
the product-Bernoulli distribution induced on bits $(B_1, \ldots, B_K)$ (corresponding to any $K$ players assigned sets $(C_1 ,\ldots, C_K)$)
when the observations of players have distribution $\p$, $i.e.$,
		\[
			\mu(\p)_j \eqdef \bE{\p}{\bPr{B_j=1\mid X}}.
		\]
   Following the same computations as in~\cite{AcharyaS:19}, we have that for all $j \in [K]$
		\[
			\mu(\p)_j = \sum_{x \in C_j} \p(x) \frac{e^\priv}{e^\priv+1} +  \sum_{x \notin C_j} \p(x) \frac{1}{e^\priv+1} = \frac{e^\priv - 1}{e^\priv +1} \p(C_j)	+ \frac{1}{e^\priv + 1}.
		\]
Then, by~\cref{lem:parseval},
\begin{align}
\normtwo{\mu(\p)-\mu(\q)}= \frac{\sqrt{K}(e^\priv-1)}{2(e^{\priv}+1)}
\normtwo{\p-\q}\geq \frac{(e^\priv-1)}{2(e^{\priv}+1)}\totalvardist{\p}{\q},
\label{eq:isometry}
\end{align}
where we used the observation that $K \ge \ab$. 

Motivated by this observation,
we obtain~\cref{alg:HR} for LDP identity testing.\footnote{Without loss of generality, we assume $K$ divides $\ns$ (as otherwise	we can ignore the last $(\ns - K \flr{\frac{\ns}{K}})$ players without changing the number of samples by a factor of $2$).}
\begin{algorithm}[ht]
  \begin{algorithmic}[1]
  \Require Privacy parameter $\priv>0$, distance parameter $\dst\in(0,1)$, $\ns$ players
  \State Define $C_j = \setOfSuchThat{i\in[K]}{H^{(K)}_{ij} = 1}$, $j\in [K]$.

\State $\ns$ players are divided into $K$ disjoint subgroups of equal size (using an explicit partition fixed ahead of time).
Players in the $j$th subgroup, $j\in[K]$,  are assigned to the set $C_j$, and they
use~\eqref{eqn:response} to generate their output bits (independent copies of $B_j$).

\State Taking one player from
		each block and viewing the resulting collection of
		messages as a length-$K$ binary vector, the referee gets $\ns/K$
                independent copies of $(B_1, B_2,\ldots, B_K)$
		generated a product-Bernoulli distribution on $\{0,1\}^K$
		with mean vector $\mu(\p)$. 

\State The referee uses these $\ns/K$ samples to test
 whether the mean
		vector $\mu(\p)$ is (i)~a prespecified vector $\mu=\mu(\q)\in\R^K$ or (ii)~at $\lp[2]$ distance at least	$\alpha=\dst/2\in(0,1]$ from $\mu(\q)$. It can use the test from, for instance,~\cite[Section~2.1]{CDKS:17}, which requires $\bigO{\sqrt{K}(\log 1/\delta)/\alpha^2}$ samples to do this. It accepts $\q$ if the mean is $\mu(\q)$, and rejects otherwise.
  \end{algorithmic}
  \caption{Locally Private Identity Testing using Hadamard Response} \label{alg:HR}
\end{algorithm}
The result below summarizes the performance of~\cref{alg:HR}.
\begin{theorem}\label{theo:identity:private:onebit}
For every $\ab\geq 1$ and $\priv\in(0,1]$, there exists a private-coin $\priv$-LDP protocol for $(\ab,\dst,\delta)$-identity testing using one bit of communication per player and $\ns=\bigO{\frac{\ab^{3/2}}{\dst^2\priv^2}\log\frac{1}{\delta}}$ players.
\end{theorem}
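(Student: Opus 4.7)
The plan is to verify three separate claims, combining them to obtain the stated sample complexity: (i)~the per-player protocol is $\priv$-LDP and uses a single bit, (ii)~the induced product-Bernoulli means are suitably separated in $\lp[2]$ distance whenever the underlying distributions are separated in total variation, and (iii)~the referee's product-Bernoulli mean-testing routine succeeds with the claimed sample budget.

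For (i), I would simply observe that the randomized response defined in~\eqref{eqn:response} outputs a single bit $B_j\in\{0,1\}$, and the ratio $\bPr{B_j=1\mid X=x}/\bPr{B_j=1\mid X=x'}$ (and similarly for $B_j=0$) is at most $e^\priv$ for all $x,x'\in[\ab]$. Since the player's output is a deterministic function of this one bit, the channel used by each player belongs to $\cW_\priv$ and satisfies the private-coin SMP definition.

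For (ii), the key ingredient is already established in~\eqref{eq:isometry}, which, using $K\geq \ab$ and $\totalvardist{\p}{\q}=\frac{1}{2}\normone{\p-\q}$, gives
\begin{equation*}
\normtwo{\mu(\p)-\mu(\q)} \;\geq\; \frac{e^\priv-1}{2(e^\priv+1)}\cdot \totalvardist{\p}{\q}\,.
\end{equation*}
In particular, in the \yes{} case $\p=\q$ implies $\mu(\p)=\mu(\q)$, while in the \no{} case $\totalvardist{\p}{\q}>\dst$ gives $\normtwo{\mu(\p)-\mu(\q)} > \alpha$ for $\alpha \eqdef \dst \cdot \frac{e^\priv-1}{2(e^\priv+1)} = \Theta(\priv\dst)$ for $\priv\in(0,1]$. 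Thus, distinguishing the two hypotheses about $\p$ reduces to distinguishing whether the mean of a product-Bernoulli vector on $\{0,1\}^K$ equals $\mu(\q)$ or is $\alpha$-far from it in $\lp[2]$ distance.

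For (iii), by construction the referee has access to $\ns/K$ i.i.d.\ copies of a product-Bernoulli vector $(B_1,\dots,B_K)$ with mean $\mu(\p)$ (using independence across players and the fact that each of the $K$ subgroups contributes independent bits). Invoking the product-Bernoulli mean tester of~\cite[Section~2.1]{CDKS:17}, which uses $\bigO{\sqrt{K}(\log(1/\delta))/\alpha^2}$ samples to test mean equality versus $\lp[2]$-distance $\alpha$ at confidence $1-\delta$, it suffices to have
\begin{equation*}
\frac{\ns}{K} \;=\; \bigO{ \frac{\sqrt{K}\,\log(1/\delta)}{\alpha^2} } \;=\; \bigO{ \frac{\sqrt{\ab}\,\log(1/\delta)}{\priv^2\dst^2} }\,.
\end{equation*}
Rearranging and recalling $K\leq 2\ab$ yields $\ns = \bigO{\ab^{3/2}\log(1/\delta)/(\priv^2\dst^2)}$, as claimed. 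I expect no real obstacle here: the main work is already bundled in~\eqref{eq:isometry} and in the off-the-shelf tester of~\cite{CDKS:17}; the remaining verification is essentially bookkeeping about the partition, independence across subgroups, and the $\Theta(\priv)$ scaling of $(e^\priv-1)/(e^\priv+1)$ for $\priv\leq 1$.
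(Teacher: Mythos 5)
Your proposal is correct and follows essentially the same route as the paper: establish that the Hadamard Response channel in~\eqref{eqn:response} is a one-bit $\priv$-LDP mechanism, use the isometry~\eqref{eq:isometry} to reduce to a product-Bernoulli mean separation of $\alpha=\Theta(\priv\dst)$ in $\lp[2]$, and then invoke the tester of~\cite[Section~2.1]{CDKS:17} on the $\ns/K$ induced i.i.d.\ $K$-dimensional bit vectors. One small note in your favor: you correctly set $\alpha=\dst\cdot\frac{e^\priv-1}{2(e^\priv+1)}$, whereas the paper's statement of ``$\alpha=\dst/2$'' in Algorithm~\ref{alg:HR} and the proof is a slight misstatement (the final bound they write does carry the $(e^\priv-1)^2$ factor, consistent with your $\alpha$).
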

\begin{proof}
We have already outline the proof in the discussion above.
It is easy to check that the mechanism in~\eqref{eqn:response} is $\priv$-LDP.
 Further, by~\eqref{eq:isometry}, the test in~\cite[Section~2.1]{CDKS:17} gives the correct outcome with probability of error less than $\delta$ if $\ns/K \gtrsim \sqrt{K}\log(1/\delta)/\alpha^2$ with $\alpha=\dst/2$,
                \ie,
		\[
		\ns = \bigO{\ab^{3/2}\frac{(e^\priv +	1)^2 }{(e^\priv - 1)^2 \dst^2}\log\frac{1}{\delta} } = \bigO{ \frac{\ab^{3/2}}{\dst^2\priv^2}\log\frac{1}{\delta}}\,,
		\]
		suffices as claimed. 		
		\end{proof}
		\begin{remark}
		  \label{rk:privatecoin:l2:guarantee}
		  From the proof of~\cref{theo:identity:private:onebit}, it is clear that the protocol provides a stronger, $\lp[2]$, guarantee: it allows one to distinguish with probability $1-\delta$ between $\normtwo{\p-\q}^2 \leq \frac{\dst^2}{\ab}$ and $\normtwo{\p-\q}^2 \geq \frac{4\dst^2}{\ab}$ with $\ns=\bigO{\frac{\ab^{3/2}}{\dst^2\priv^2}\log\frac{1}{\delta}}$ players (by Cauchy--Schwarz, this implies the total variation testing guarantee). 		  
		  Moreover, the protocol does not require the players to have knowledge of the reference distribution $\q$; it is sufficient that the referee knows it. Both these points are useful, later, for our independence testing results.
		\end{remark}

  \subsection{Public-coin protocols}
    \label{sec:identity-public}
The HR based identity testing protocol generates samples from a product-Bernoulli distribution by assigning different subsets to different subgroups of players. Specifically, we found subsets such that, for any two distributions $\p$ and $\q$, the $\lp[2]$ distance between
the means of the induced approximately $\ab$-dimensional
product distributions is roughly equal to the $\lp[2]$ distance between $\p$ and $\q$.

Interestingly, we can interpret Lemma~\ref{lem:parseval} to get that
for $I$ distributed uniformly over $[K]$, \\$\bEE{(\p(C_I)-\q(C_I))^2}\geq \normtwo{\p-\q}^2/4$. This suggests the possibility of finding a random subset $S$ such that $(\p(S)-\q(S))^2\gtrsim \dst^2/\ab$ if $\totalvardist{\p}{\q}\geq \dst$. Such a set is very handy:
We can simply implement a version of~\cref{alg:HR} with $K=1$ using this set and get a test that works with roughly $\ab/(\rho^2\dst^2$) samples.
This saving in sample-complexity arises from the fact that we were able to retain the same ``per dimension'' $\lp[2]$ distance as that using HR, while using much smaller (only one) dimensional observations. 
But the players need to use public coins to share this set $S$. We formalize
this protocol in this section. 

The first component of our protocol is the following lemma from~\cite{ACT:19}, specialized to a target domain of size $2$. 
\begin{theorem}[{\cite[Theorem~VI.2]{ACT:19}}]
  \label{theo:random:subset}
  Fix any $\ab$-ary distributions $\p,\q$. If $S\subseteq[\ab]$ is a set chosen uniformly at random, we have the following. (i)~if $\p=\q$, then 
  $\p(S)=\q(S)$ with probability one;
  and (ii)~if $\totalvardist{\p}{\q} > \dst$, then
  \[
    \probaDistrOf{S}{ \Paren{\p(S)-\q(S)}^2 >
      \frac{\dst^2}{2\ab} } \geq c\,.
  \]
where $c=1/228$. 
\end{theorem}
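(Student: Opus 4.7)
The plan is to reduce the statement to an anti-concentration / small-ball claim about a zero-mean random variable built from independent bits, and then apply a Paley--Zygmund second moment argument combined with a fourth-moment (hypercontractivity) bound.

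More concretely, write $d(x) \eqdef \p(x)-\q(x)$ so that $\sum_{x\in[\ab]} d(x) = 0$, and set $Z \eqdef \p(S)-\q(S) = \sum_{x\in[\ab]} d(x)\indic{x\in S}$. Since $S$ is uniformly random, the indicators $\indic{x\in S}$ are i.i.d.\ $\bernoulli{1/2}$, so letting $\xi_x \eqdef \indic{x\in S}-\frac{1}{2}$ (i.i.d.\ uniform on $\{\pm 1/2\}$) and using $\sum_x d(x)=0$, I would rewrite $Z = \sum_x d(x)\xi_x$. Part~(i) is immediate from the definition ($\p=\q$ gives $d\equiv 0$). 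For part~(ii), the second moment is
\begin{equation*}
\expect{Z^2} \;=\; \sum_{x\in[\ab]} d(x)^2 \,\expect{\xi_x^2} \;=\; \tfrac{1}{4}\normtwo{\p-\q}^2,
\end{equation*}
and Cauchy--Schwarz gives $\normtwo{\p-\q}^2 \geq \normone{\p-\q}^2/\ab = 4\totalvardist{\p}{\q}^2/\ab > 4\dst^2/\ab$. Hence $\expect{Z^2} > \dst^2/\ab$, which is twice the threshold we want.

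The main step, and the only nontrivial one, is to convert this second moment lower bound into a probability lower bound on $\{Z^2 > \dst^2/(2\ab)\}$. I would apply Paley--Zygmund to the nonnegative variable $W=Z^2$ at level $\theta = 1/2$, which requires an upper bound on $\expect{Z^4}$ in terms of $\expect{Z^2}^2$. Writing $\epsilon_x = 2\xi_x$ as Rademacher signs, $Z = \tfrac{1}{2}\sum_x d(x)\epsilon_x$ is a Rademacher chaos of degree one, so Khintchine (equivalently, $(2,4)$-hypercontractivity of $\pm 1$ signs) yields $\expect{Z^4} \leq C\expect{Z^2}^2$ for an explicit universal constant $C$. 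Plugging this into
\begin{equation*}
\probaDistrOf{S}{Z^2 > \tfrac{1}{2}\expect{Z^2}} \;\geq\; \tfrac{1}{4}\cdot\frac{\expect{Z^2}^2}{\expect{Z^4}} \;\geq\; \frac{1}{4C}
\end{equation*}
and using $\frac{1}{2}\expect{Z^2} > \dst^2/(2\ab)$ gives a constant-probability lower bound; tracking the constants carefully (which is where the specific $1/228$ would come from) completes the argument.

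The main obstacle is thus obtaining a sufficiently tight fourth-moment bound: a naive expansion of $\expect{Z^4}$ produces cross terms $\sum_{x\neq y} d(x)^2 d(y)^2$ and $\sum_x d(x)^4$, and one needs the identity $\expect{\epsilon_x^4}=1$ together with independence to collapse these into $3(\sum_x d(x)^2)^2 - 2\sum_x d(x)^4 \leq 3\expect{Z^2}^2 \cdot 16$. Everything else is bookkeeping; the only genuinely delicate issue is making sure the numerical constant from Paley--Zygmund combined with the hypercontractive constant yields the claimed $c=1/228$.
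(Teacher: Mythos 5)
Your proof is correct, and is essentially the argument behind this result (which the paper itself cites to \cite{ACT:19} rather than proving; the closest analogue the paper does prove is \cref{theorem:random:product:subsets} in the appendix, which uses the same Paley--Zygmund--plus--moments strategy). The one substantive difference is in how the fourth moment is controlled: for the bilinear version in the appendix the paper goes through a moment-generating-function/spectral-radius bound, whereas for the degree-one Rademacher chaos here your direct expansion $\expect{(\sum_x d(x)\epsilon_x)^4}=3(\sum_x d(x)^2)^2-2\sum_x d(x)^4\leq 3(\sum_x d(x)^2)^2$ is cleaner and sharper. Running Paley--Zygmund with $\theta=1/2$ and $\expect{Z^4}\leq 3\expect{Z^2}^2$ gives $\probaOf{Z^2>\tfrac12\expect{Z^2}}\geq \tfrac14\cdot\tfrac13=\tfrac1{12}$, and since $\tfrac12\expect{Z^2}>\dst^2/(2\ab)$ this yields $c=1/12$, which \emph{beats} the quoted $c=1/228$. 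Your closing remark that careful constant tracking is ``where the specific $1/228$ would come from'' is the only slip: the $1/228$ is an artifact of the more general hashing statement in \cite{ACT:19} (which covers non-uniform hashes), not of this computation, and specializing to the uniformly random subset as you do legitimately produces a better constant.
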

\cnote{The constant comes from the proof of Theorem~VI.2. For our choice of u.a.r. set, we have $\probaOf{Y_1\neq Y_2}=1/2$, so we get $\alpha = 9/2048 > 1/228$.}
Thus, indeed, we can find our desired random set $S$.

Next, we present an LDP protocol for testing the bias of coins, our LDP identity testing problem for $\ab=2$. The protocol below
can be viewed as a special case of our protocol in~\cref{ssec:identity:hadamard}; we include this simpler result here for completeness. We have the following.
\begin{lemma}[Locally Private Bias Estimation, Warmup]
      \label{lem:binary-testing}
For every $\priv\in(0,1]$, there exists a private-coin $\priv$-LDP protocol for $(2,\dst,\delta)$-identity testing using one bit of communication per player and $\ns=\bigO{\frac{1}{\dst^2\priv^2}\log\frac{1}{\delta}}$ players. Moreover, the players do not need to know the reference distribution. 
\end{lemma}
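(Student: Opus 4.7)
The plan is to use the classical (binary) randomized response mechanism and threshold an unbiased linear estimator of the bias at the referee. Since this uses only the player's own data and private randomness, the mechanism itself is independent of the reference distribution $\q$, giving the final ``knowledge-free'' property automatically.

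Concretely, identify $\q$ with its bias $q\eqdef\q(1)\in[0,1]$ and similarly $p\eqdef\p(1)$. For distributions on $\{1,2\}$, $\totalvardist{\p}{\q}=|p-q|$, so the hypotheses become $p=q$ versus $|p-q|>\dst$. Each player $i$, holding $X_i\in\{1,2\}$, transmits
\[
Y_i \eqdef \begin{cases} X_i & \text{w.p. } \frac{e^\priv}{e^\priv+1},\\ 3-X_i & \text{w.p. } \frac{1}{e^\priv+1},\end{cases}
\]
using only its private randomness; this channel is $\priv$-LDP since the ratio of output probabilities for any two inputs equals $e^\priv$. Note the mechanism does not depend on $\q$.

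Let $\alprappor \eqdef \frac{e^\priv-1}{e^\priv+1}$ and $\betrappor \eqdef \frac{1}{e^\priv+1}$ as in~\eqref{eq:rappor:parameters}. A direct calculation gives $\probaOf{Y_i=1}=\alprappor p+\betrappor$, so the referee's empirical statistic $\hat{T}\eqdef\frac{1}{\ns}\sum_{i=1}^{\ns}\indic{Y_i=1}$ satisfies $\shortexpect[\hat T]=\alprappor p+\betrappor$ with $\hat T$ an average of $\ns$ i.i.d.\ Bernoulli random variables. The referee then computes the ``target'' value $\tau\eqdef \alprappor q+\betrappor$ and accepts iff $|\hat T-\tau|\leq \alprappor\dst/2$. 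Under the null, $|\hat T-\tau|$ concentrates around $0$; under the alternative, $|\shortexpect[\hat T]-\tau|=\alprappor|p-q|>\alprappor\dst$, so the threshold lies halfway between the two regimes.

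By Hoeffding's inequality applied to the bounded random variables $\indic{Y_i=1}\in\{0,1\}$, the probability that $\hat T$ deviates from its mean by more than $\alprappor\dst/2$ is at most $2\exp(-\ns\alprappor^2\dst^2/2)$, which is at most $\delta$ once $\ns=\bigO{\frac{1}{\alprappor^2\dst^2}\log\frac{1}{\delta}}$. For $\priv\in(0,1]$, $\alprappor=\tanh(\priv/2)=\Theta(\priv)$, yielding the claimed $\ns=\bigO{\frac{1}{\dst^2\priv^2}\log\frac{1}{\delta}}$. The only place where $\q$ enters is in the referee's threshold $\tau$, so the players themselves are oblivious to $\q$, completing the proof. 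The main thing to be careful about is merely the $\alprappor=\Theta(\priv)$ bound in the regime $\priv\in(0,1]$, which is routine.
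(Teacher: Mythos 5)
Your proof is correct and takes essentially the same approach as the paper: binary randomized response at the players, reducing the task to distinguishing two Bernoulli biases separated by $\Theta(\priv\dst)$, with the referee needing only $O(\priv^{-2}\dst^{-2}\log(1/\delta))$ samples. The only cosmetic difference is that you invoke Hoeffding directly to get the $\log(1/\delta)$ factor, whereas the paper uses a constant-error test followed by majority-vote amplification; both are standard and equivalent.
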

\begin{proof}
Assume without loss of generality that the reference distribution is $\q=\bernoulli{q}$. The algorithm uses a simple Randomized Response (RR) scheme~\cite{Warner:65}, where each sample is flipped with probability $1/(e^\priv+1)$. When the input is $\bernoulli{p}$, the output distribution is $\bernoulli{(1+p(e^\priv-1)/(e^\priv+1))}$. Therefore, if $p -q>\dst$, then the bias of the output distribution of applying RR to $\bernoulli{p}$, and $\bernoulli{q}$ differ by $(p-q)(e^\priv-1))/(e^\priv+1)$, which is $\Omega(\dst\priv)$ for $\priv=O(1)$. To distinguish these two Bernoulli distributions with a constant probability, $O(1/(\priv^2\dst^2))$ samples suffice, and the success probability can be boosted to $1-\delta$ by repeating $O(\log(1/\delta)$ times. 
\end{proof}

Motivated by these observations, we propose~\cref{alg:optimal:identity} for public-coin LDP identity testing. 

\begin{algorithm}[ht]
\begin{algorithmic}[1]
\Require Privacy parameter $\priv>0$, distance parameter $\dst\in(0,1)$, $\ns$ players
\State Set
\[
    c \gets \frac{1}{288}\, \qquad \delta_0 \gets \frac{c}{2(1+c)}, \qquad \dst' \gets \frac{\dst}{\sqrt{2\ab}}, \qquad T = \Theta(1), \qquad m \gets \frac{\ns}{T}.
\]
\State Partition the players in $T$ subgroups $G_1,\dots,G_T$ of $m$ players
\For{$t$ from $1$ to $T$} \Comment{In parallel}
  \State Players in $G_t$ generate uniformly at random a common subset $S_t\subseteq [\ab]$.
  \ForAll{$i\in G_t$}
    \State  Player $i$ converts their sample $X_i$ to $X'_i\eqdef \indic{X_i\in S_t}$.
  \EndFor
  \State Players in $G_t$ (and the referee) run the protocol from~\cref{lem:binary-testing} on the samples $(X'_i)_{i\in G_t}$ to test identity of $\p(S_t)$ to $\q(S_t)$, with distance parameter $\dst'$ and failure probability $\delta_0$
\EndFor
\LineComment{At the referee}
\State Let $\tau$ denote the fraction of the $T$ protocols that returned $\accept$
\If{$\tau > 1- (\delta_0 + \frac{c}{4})$}
  \State \Return \accept
\Else
  \State \Return \reject
\EndIf
\end{algorithmic}
\caption{Locally Private Identity Testing}\label{alg:optimal:identity}
\end{algorithm}

We close this section with a characterization of performance of our proposed algorithm.
\begin{theorem}
  \label{theo:identity:public:onebit:restated}
For every $\ab\geq 1$ and $\priv\in(0,1]$, there exists a public-coin $\priv$-LDP protocol for $(\ab,\dst,\delta)$-identity testing using one bit of communication per player and $\ns=\bigO{\frac{\ab}{\dst^2\priv^2}\log\frac{1}{\delta}}$ players.
\end{theorem}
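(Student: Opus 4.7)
I would analyze~\cref{alg:optimal:identity} in three steps: verifying privacy, establishing per-subgroup completeness/soundness, and aggregating via a concentration bound. The mechanism each player runs is: form the single bit $X'_i=\indic{X_i\in S_t}$ (where $S_t$ is drawn from public randomness) and apply Randomized Response to it; for any realization of $S_t$ this is exactly the $\priv$-LDP scheme of~\cref{lem:binary-testing}. Since the channel $X_i\mapsto Y_i$ is $\priv$-LDP even conditional on the full public transcript, public randomness does not degrade the LDP guarantee, and the whole protocol is $\priv$-LDP as required.

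\textbf{Per-subgroup analysis.} If $\p=\q$, then $\p(S_t)=\q(S_t)$ deterministically for every realization of $S_t$, and~\cref{lem:binary-testing} guarantees that each subgroup's test accepts with probability at least $1-\delta_0$. If $\totalvardist{\p}{\q}>\dst$, then~\cref{theo:random:subset} gives that with probability at least $c=1/228$ over the public coins one has $\abs{\p(S_t)-\q(S_t)}>\dst/\sqrt{2\ab}=\dst'$; conditional on this ``good'' event,~\cref{lem:binary-testing} applied with distance $\dst'$ and failure parameter $\delta_0$ on $m=\bigOmega{\frac{1}{(\dst')^2\priv^2}\log\frac{1}{\delta_0}}=\bigOmega{\frac{\ab}{\dst^2\priv^2}}$ players rejects with probability at least $1-\delta_0$. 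Combining, in the no-case each subgroup accepts with probability at most $(1-c)+c\delta_0=1-c(1-\delta_0)$. A short algebraic check using $\delta_0=c/(2(1+c))$ then shows that the decision threshold $1-(\delta_0+c/4)$ sits $c/4$ below the yes-case expected accept rate and $c/4$ above the no-case expected accept rate.

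\textbf{Concentration and boosting.} Conditioned on the public coins, the $T$ subgroup outcomes are independent Bernoulli random variables, so Hoeffding's inequality bounds the probability that $\tau$ deviates from its expectation by more than $c/4$ by $2\exp(-Tc^2/8)$. Taking $T=\bigTheta{\log(1/\delta)}$ drives this below $\delta$, for a total sample count of $\ns=mT=\bigO{\frac{\ab}{\dst^2\priv^2}\log\frac{1}{\delta}}$ and one bit of communication per player. Equivalently, and closer to the literal $T=\Theta(1)$ of~\cref{alg:optimal:identity}, one can fix $T=\Theta(1)$ to obtain constant error and then amplify by running $\bigO{\log(1/\delta)}$ independent copies of the entire protocol on disjoint batches of players and outputting the majority vote, yielding the same sample bound.

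\textbf{Main obstacle.} The conceptual content is entirely packaged into the two black boxes~\cref{theo:random:subset} and~\cref{lem:binary-testing}: the former is the ``random-subset isometry'' that saves the $\sqrt{\ab}$ factor relative to the private-coin setting, and the latter is the one-bit LDP binary identity tester. Consequently, the only real care needed is in the algebraic calibration of $\delta_0$ and of the decision threshold so that the yes- and no-case expected accept rates are separated by a constant gap (independent of $\ab,\dst,\priv$) into which Hoeffding's inequality has room to bite; this is exactly what the specific choice of $\delta_0$ in~\cref{alg:optimal:identity} accomplishes, and I do not anticipate any deeper obstacle.
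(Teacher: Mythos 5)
Your proposal is correct and follows essentially the same route as the paper: reduce to a one-bit test via the random-subset hashing of \cref{theo:random:subset}, run the binary $\priv$-LDP tester of \cref{lem:binary-testing} on $T=\Theta(1)$ independent subgroups, calibrate $\delta_0=c/(2(1+c))$ and the threshold $1-(\delta_0+c/4)$ so that the yes- and no-case accept rates are separated by a $c/4$ gap, and aggregate by a Chernoff/Hoeffding bound. One small imprecision worth fixing: you should not condition on the public coins before invoking Hoeffding (the conditional expectation of $\tau$ is a random quantity that need not satisfy the desired bound for every realization of the $S_t$); instead, observe that the $b_t$ are already independent \emph{unconditionally}—each depends only on its own $S_t$, samples, and private coins—and apply Hoeffding against the unconditional mean $\mathbb{E}[\tau]$, which is $\geq 1-\delta_0$ in the yes-case and $\leq 1-(\delta_0+c/2)$ in the no-case.
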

\begin{proof}
  The proof of correctness follows the foregoing outline, which we describe in more detail. Let $c\eqdef 1/288$ be the constant from~\cref{theo:random:subset}, let $\delta_0 \eqdef \frac{c}{2(1+c)}=1/458$, and set $\dst' \eqdef \frac{\dst}{\sqrt{2\ab}}$.
  
Consider the $t$-th test from~\cref{alg:optimal:identity} (where $1\leq t\leq T$), and let $b_t$ be the indicator that the protocol run by players in $G_t$ returned \accept. If $\p=\q$, then by the above we have $\probaOf{ b_t = 1 } \geq 1- \delta_0$ (where the probability is over the choice of the random subset $S_t$, and the randomness of protocol from~\cref{lem:binary-testing}). However, if  $\p$ is $\dst$-far from $\q$, by~\cref{theo:random:subset} it it the case that $\probaOf{ b_t = 1 } \leq (1-c) + c\delta_0 = 1- (\delta_0+\frac{c}{2})$. Therefore, for a sufficiently large constant in the choice of $T = \Theta(1/c^2) = \Theta(1)$, a Chernoff bound argument ensures that we can distinguish between these two cases with probability at least $2/3$.
\end{proof}

    \section{Locally private independence testing}
\label{sec:independence}
In this section, we establish the sample complexity of testing independence of discrete distrbutions. We present private-coin and public-coin protocols for LDP independence testing that require $\bigOmega{\frac{\ab^{3}}{\dst^2\priv^2}}$
and $\bigOmega{\frac{\ab^{2}}{\dst^2\priv^2}}$ players, respectively.
In fact, we show matching lower bounds for these sample complexities
in the final subsection, establishing their optimality among private-coin and public-coin protocols, respectively. The lower bound is a consequence of a general reduction between independence and uniformity testing, which may be of independent interest.

  \subsection{Private-coin protocols}
  \label{sec:independence-private}
To design a private-coin LDP independence testing protocol using $\bigO{\frac{\ab^3}{\priv^2 \dst^2 }}$ players, the first observation we make
is that we can find a product distribution $\widehat{\p}$ that is $\dst/\ab$-close
in
$\lp[2]$ distance from the product distribution $\p_1\times \p_2$ using
$\bigO{\frac{\ab^3}{\priv^2 \dst^2 }}$ players. When the generating distribution $\p$
is not a product distribution, from the separation between our hypothesis, we know that
$\p$ must have $\lp[2]$ distance exceeding $\dst/\ab$ from the product distribution $\widehat{\p}$ we find which is close to $\p_1\times \p_2$.
After this point,
treating $\widehat{\p}$ as the reference, we can use an private-coin LDP identity testing protocol to test if the samples are generated from
a distribution that is close to the 
(product) reference distribution
(in $\lp[2]$ distance) or far from it.  

Formally, we describe the algorithm in Algorithm~\ref{alg:independence:private}, and present its performance in Theorem~\ref{theo:independence:private}.
\begin{algorithm}[ht]
	\begin{algorithmic}[1]
		\Require Privacy parameter $\priv>0$, distance parameter
		$\dst\in(0,1)$, $\ns = \bigO{\frac{\ab^{3}}{\dst^2\priv^2}}$ players 
		\State Partition the players in two groups, $L$ (``learning'') and $T$ (``testing''), each of size
		$\frac{\ns}{2}$.
		\State Players in group $L$ run a $\priv$-LDP \emph{learning}
		protocol to estimate $\p_1\otimes\p_2$ in $\lp[2]$ distance,
		obtaining $\widehat{\p}_1 \otimes \widehat{\p}_2$ such that
		$\normtwo{\widehat{\p}_1 \otimes \widehat{\p}_2 -\p_1 \otimes
			\p_2}^2\leq \frac{\dst^2}{2\ab^2}$ (using the protocol
		of~\cref{lem:product-learn}).
		\State Players in group $T$ run a $\priv$-LDP \emph{identity testing}
		protocol on $\p$, to distinguish between $\normtwo{\p
			-\widehat{\p}_1 \otimes \widehat{\p}_2}^2\leq
		\frac{\dst^2}{2\ab^2}$ and $\normtwo{\p -\widehat{\p}_1 \otimes
			\widehat{\p}_2}^2 \geq \frac{2\dst^2}{\ab^2}$ (using the
		protocol of~\cref{theo:identity:private:onebit}).
	\end{algorithmic}
	\caption{Locally Private Independence Testing (Private-coin)}\label{alg:independence:private}
\end{algorithm}
\begin{theorem}\label{theo:independence:private}
  For every $\ab\geq 1$ and $\priv\in(0,1]$, there exists a private-coin $\priv$-LDP protocol for $(\ab,\dst,\delta)$-independence testing using one bit of communication per player and $\ns=\bigO{\frac{\ab^{3}}{\dst^2\priv^2}\log\frac{1}{\delta}}$ players, where $\dst\in(0,1]$ is the distance parameter.
   \end{theorem}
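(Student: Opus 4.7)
My plan is to execute the two-stage ``learn then test'' template laid out in \cref{alg:independence:private}: first, use the group $L$ to produce an $\lp[2]$-accurate estimate $\widehat{\p}_1 \otimes \widehat{\p}_2$ of the marginal product; second, use the group $T$ to invoke the $\lp[2]$-flavored identity tester of \cref{theo:identity:private:onebit} (\cref{rk:privatecoin:l2:guarantee}) with $\widehat{\p}_1 \otimes \widehat{\p}_2$ as reference on the alphabet $[\ab]\times[\ab]$. Since $L$ and $T$ are disjoint and every player emits a single $\priv$-LDP message, local privacy holds by inspection; the substance of the proof is verifying that both phases fit within $\bigO{\ab^{3}/(\dst^2 \priv^2) \log(1/\delta)}$ one-bit messages and that their precision parameters compose correctly via the triangle inequality.

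\textbf{Learning the marginals in $\lp[2]$.} I would further split $L$ into two equal subgroups and, in each, project the joint sample $(X_i,Y_i) \sim \p$ onto one coordinate before feeding it into the one-bit Hadamard Response estimator of~\cite{AcharyaS:19}. Coordinate projection is data-preprocessing and thus leaves the $\priv$-LDP guarantee intact. That estimator achieves $\bEE{\normtwo{\widehat{\p}_j - \p_j}^2} = \bigO{\ab/(\priv^2 m)}$ on $m$ samples, so with $m = \Theta(\ab^{3}/(\dst^2 \priv^2))$ (absorbing $\log(1/\delta)$ via a standard median-of-means amplification) each marginal is learned to $\lp[2]$-error at most $\beta \dst/(2\ab)$ with high probability, for a small constant $\beta$ to be pinned down below. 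The standard product-error decomposition
\[
\normtwo{\widehat{\p}_1 \otimes \widehat{\p}_2 - \p_1 \otimes \p_2} \leq \normtwo{\widehat{\p}_2}\normtwo{\widehat{\p}_1 - \p_1} + \normtwo{\p_1}\normtwo{\widehat{\p}_2 - \p_2},
\]
combined with the fact that distribution $\lp[2]$-norms are at most $1$, then yields $\normtwo{\widehat{\p}_1 \otimes \widehat{\p}_2 - \p_1 \otimes \p_2} \leq \beta \dst/\ab$.

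\textbf{Testing against the learned reference.} With $\widehat{\p}_1 \otimes \widehat{\p}_2$ available at the referee, I would apply \cref{rk:privatecoin:l2:guarantee} on the domain $[\ab] \times [\ab]$ (of size $\ab^2$): this distinguishes $\normtwo{\p - \widehat{\p}_1\otimes\widehat{\p}_2}^2 \leq \alpha^2/\ab^2$ from $\normtwo{\p - \widehat{\p}_1\otimes\widehat{\p}_2}^2 \geq 4\alpha^2/\ab^2$ using $\bigO{\ab^{3}/(\alpha^2 \priv^2) \log(1/\delta)}$ one-bit messages, and critically does not require the players to know the reference, which is essential since $\widehat{\p}_1 \otimes \widehat{\p}_2$ is a random object only known to the referee. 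Setting $\alpha = \Theta(\dst)$ matches the target sample complexity. Completeness follows immediately from the learning guarantee; for soundness, if $\p$ is $\dst$-far in TV from every product distribution then $\totalvardist{\p}{\p_1 \otimes \p_2} > \dst$ in particular, Cauchy--Schwarz on the $\ab^2$-alphabet gives $\normtwo{\p - \p_1 \otimes \p_2} \geq 2\dst/\ab$, and the triangle inequality yields $\normtwo{\p - \widehat{\p}_1 \otimes \widehat{\p}_2} \geq (2-\beta)\dst/\ab$.

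\textbf{Main obstacle.} The conceptual pieces are all standard; the only care needed is constant-bookkeeping to align the learning slack with the factor-$4$ squared-distance gap that the $\lp[2]$-tester demands. Concretely, one needs $(2-\beta)^2 \geq 4 \beta^2$, i.e.\ $\beta \leq 2/3$, which holds comfortably for $\beta = 1/2$, and $\alpha$ can then be chosen in the resulting window so that both the null bound $\beta\dst/\ab$ and the alternative bound $(2-\beta)\dst/\ab$ fall on the correct sides of $\alpha/\ab$ and $2\alpha/\ab$ respectively. A union bound over the two phases controls overall error, and the usual repetition-and-majority amplification supplies the $\log(1/\delta)$ factor.
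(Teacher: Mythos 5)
Your proposal is correct and follows the same two-stage ``learn the product of marginals in $\lp[2]$, then run the $\lp[2]$-flavored identity tester from \cref{rk:privatecoin:l2:guarantee} on $[\ab]\times[\ab]$'' scheme as the paper, with the same sample budget in each phase and the same crucial observation that the reference need not be known to the players. The only (harmless) divergence is in the soundness step: the paper exploits directly that $\widehat{\p}_1\otimes\widehat{\p}_2$ is itself a product distribution, so $\totalvardist{\p}{\widehat{\p}_1\otimes\widehat{\p}_2}>\dst$ holds outright and Cauchy--Schwarz gives $\normtwo{\p-\widehat{\p}_1\otimes\widehat{\p}_2}^2 > 4\dst^2/\ab^2$ with no slack lost to the learning error, whereas you detour through $\p_1\otimes\p_2$ via the triangle inequality and obtain the weaker $(2-\beta)\dst/\ab$; your constant bookkeeping correctly shows this still fits in the tester's gap, so the conclusion is unaffected.
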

\begin{proof}
We first note that Algorithm~\ref{alg:independence:private} can be implemented in the SMP setting. Recall that the protocol of~\cref{theo:identity:private:onebit} does
not require the players to know the reference distribution, and
therefore the protocol can be performed in the SMP setting, where players all send their messages simultaneously to the referee. Indeed, in our case, this reference distribution is the
product distribution $\widehat{\p}_1 \otimes \widehat{\p}_2$ computed
from the messages of the players in $L$, so the fact that the players' messages (from the group $T$) do not require knowledge of the reference distribution is crucial to obtain an SMP protocol.

\begin{lemma} \label{lem:product-learn}
Given samples from a distribution $\p$ over $[\ab] \times [\ab]$ with
marginals $\p_1$ and $\p_2$, there exists a private-coin $\priv$-LDP
protocol with $\bigO{\frac{\ab^3}{\priv^2 \dst^2 }}$ players that
outputs distributions $\widehat{\p}_1$, $\widehat{\p}_2$ such that
$\normtwo{\widehat{\p}_1 \otimes \widehat{\p}_2 -\p_1 \otimes
  \p_2}^2\leq \frac{\dst^2}{2\ab^2}$ with probability at least
$5/6$. Moreover, each player sends one bit. 
\end{lemma}
\begin{proof}
	From the known results on LDP distribution estimation~\cite{KairouzBR16,ASZ:18:HR,AcharyaS:19}, with $\bigO{\frac{\ab}{\priv^2 (\dst/\ab)^2 }}=\bigO{\frac{\ab^3}{\priv^2\dst^2 }}$ players one can under $\priv$-LDP output distributions $\widehat{\p}_1$, $\widehat{\p}_2$ such that 
	\[
		\normtwo{\widehat{\p}_1 - \p_1}^2 \le \frac{\dst^2}{8\ab^2}, \qquad \normtwo{\widehat{\p}_2 - \p_2}^2 \le \frac{\dst^2}{8\ab^2}
	\]
	with probability at least $5/6$.	Whenever this guarantee holds, it implies that
    	\begin{align*}
    	\normtwo{\widehat{\p}_1 \otimes \widehat{\p}_2 -\p_1 \otimes \p_2}^2  
    	&\le 2 \cdot \normtwo{\widehat{\p}_1 \otimes \p_2 - \p_1 \otimes \p_2}^2 + 2 \cdot \normtwo{\widehat{\p}_1 \otimes \widehat{\p}_2 - \widehat{\p}_1 \otimes \p_2}^2  \\
    	&\le 2\Paren{\normtwo{\widehat{\p}_1 - \p_1}^2 + \normtwo{\widehat{\p}_2 - \p_2}^2} 
    	\le \frac{\dst^2}{2\ab^2} 
    	\end{align*}
	proving the lemma. The bound on the per-player communication follows from the protocol of~\cite{AcharyaS:19}.
\end{proof}

Using the protocol from~\cref{lem:product-learn}, we get the following with probability $5/6$. If $\p$ is a product distribution with marginals $\p_1$ and $\p_2$, then 
\[
	\normtwo{\widehat{\p}_1 \otimes \widehat{\p}_2 - \p}^2 = \normtwo{\widehat{\p}_1 \otimes \widehat{\p}_2 - \p_1 \otimes \p_2}^2 \le \frac{\dst^2}{2\ab^2}.
\]
If however $\p$ is $\dst$-far from being a product distribution, then, by the Cauchy--Schwarz inequality,
\[
	\normtwo{\p - \widehat{\p}_1 \otimes \widehat{\p}_2}^2 \ge \frac{4}{\ab^2}\normone{\p - \widehat{\p}_1 \otimes \widehat{\p}_2}^2 > 4\frac{\dst^2}{\ab^2}.
\]
We can use the protocol from~\cref{theo:identity:private:onebit}
(specifically, recalling~\cref{rk:privatecoin:l2:guarantee}) to distinguish the two cases with $\bigO{\frac{(\ab^2)^{3/2}}{\priv^2\dst^2 }}=\bigO{\frac{\ab^3}{\priv^2\dst^2 }}$ players, and probability of success $5/6$. By a union bound over the two protocols used, the overall tester is successful with probability at least $2/3$. Amplifying the probability of success to $1-\delta$ by running the protocol in parallel on $O(\log(1/\delta))$ disjoint sets of players and taking the majority output yields the result.
\end{proof}

  \subsection{Public-coin protocols}
  \label {sec:idependence-public}
We now present our public-coin protocol for LDP independence testing.
Our approach is similar to the one we followed for our public-coin
LDP identity testing protocol: namely, we first use public coins
to ``embed'' the problem in a smaller domain of size $\ab=2$, and then apply an LDP independence test for $\ab=2$. 
For this strategy to work, we first need a
result guaranteeing that randomly
hashing the domain $[\ab]\times[\ab]$ to $\{0,1\}\times\{0,1\}$
{while respecting the product structure} preserves distances. This is what we provide next, establishing an analogue
of~\cref{theo:random:subset} tailored to the product space setting.
\begin{theorem}
  \label{theo:random:subset:product}
  Fix any distribution $\p$ over $[\ab]\times[\ab]$ with marginals
  $\p_1,\p_2$. If $S_1,S_2\subseteq[\ab]$ are two sets chosen
 independently and uniformly at random, we have the following. (i)~if
  $\p=\p_1\otimes\p_2$, then $\p(S_1\times S_2)=\p_1(S_1)\p_2(S_2)$
  with probability one; and (ii)~if
  $\totalvardist{\p}{\p_1\otimes\p_2} > \dst$, then
  \[
    \probaDistrOf{S_1,S_2}{ \Paren{\p(S_1\times
        S_2)-\p_1(S_1)\p_2(S_2)}^2 > \frac{\dst^2}{8\ab} } \geq c\,.
  \] for some absolute constant $c>0$. (Moreover, one can take $c=1/4096$.)
\end{theorem}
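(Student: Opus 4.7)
The first clause is immediate: if $\p=\p_1\otimes\p_2$, then $\p(S_1\times S_2) = \p_1(S_1)\p_2(S_2)$ pointwise, so the claim holds with probability one. For the quantitative part, I would use a second-moment/Paley--Zygmund argument, mirroring the proof template behind~\cref{theo:random:subset} but adapted to a \emph{bilinear} form. Set $D(x,y) := \p(x,y) - \p_1(x)\p_2(y)$; since $\p_1,\p_2$ are the marginals of $\p$, both row sums and column sums of $D$ vanish, and $\normone{D} = 2\dtv(\p,\p_1\otimes\p_2) > 2\dst$. Introducing centered indicators $\tilde I_x := \indic{x\in S_1} - 1/2$ and $\tilde J_y := \indic{y\in S_2}-1/2$ (which form $2\ab$ independent, mean-zero, $\pm 1/2$-valued variables), a direct expansion of $(\tilde I_x + 1/2)(\tilde J_y + 1/2)$ together with the zero-marginal cancellations kills the would-be linear and constant terms, leaving
\[
    f(S_1,S_2) \eqdef \p(S_1\times S_2) - \p_1(S_1)\p_2(S_2) = \sum_{x,y\in[\ab]} D(x,y)\,\tilde I_x\tilde J_y.
\]
Thus $f$ is a \emph{homogeneous} degree-$2$ multilinear polynomial in $2\ab$ independent Rademacher-type variables, which is the right setting to invoke hypercontractivity.

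I would then bound the second and fourth moments of $f$. Using $\expect{\tilde I_x\tilde I_{x'}} = (1/4)\indic{x=x'}$ and the analogous identity for $\tilde J$, a short computation (where the off-diagonal cross-terms collapse thanks to the zero-marginal identities $\sum_x D(x,y) = 0$ and $\sum_y D(x,y) = 0$) yields $\expect{f^2} = \normtwo{D}^2/16$; Cauchy--Schwarz across the $\ab^2$ entries of $D$ gives $\normtwo{D}^2 \geq \normone{D}^2/\ab^2 > 4\dst^2/\ab^2$. For the fourth moment, homogeneity of $f$ as a degree-$2$ polynomial in independent symmetric $\pm 1/2$ variables lets us apply the Bonami--Beckner $(4,2)$-hypercontractivity inequality to obtain $\expect{f^4} \leq C \expect{f^2}^2$ for an absolute constant $C$ (e.g.\ $C=81$). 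Paley--Zygmund applied to the non-negative random variable $Z = f^2$ then gives
\[
    \probaOf{\, f^2 \geq \tfrac{1}{2}\expect{f^2} \,}
    \;\geq\; \tfrac{1}{4}\cdot\frac{\expect{f^2}^2}{\expect{f^4}}
    \;\geq\; \frac{1}{4C},
\]
and combining with the second-moment lower bound produces an event of the form $\{\,f^2 > c'\dst^2/\ab^2\,\}$ of probability at least $1/(4C) > 0$, as required.

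The principal subtlety is in the first step: one must verify carefully that, after expansion, $f$ really reduces to a \emph{homogeneous} degree-$2$ form. This is exactly where the vanishing of both marginals of $D$ is used (eliminating what would otherwise be degree-$1$ and degree-$0$ contributions coming from $(\tilde I_x+1/2)(\tilde J_y+1/2)$), and it is what lets hypercontractivity be invoked with a dimension-free constant — which in turn is what makes the final probability independent of $\ab$. The remaining work is bookkeeping: tracking the explicit constant in Bonami--Beckner for degree two and propagating it through Paley--Zygmund to recover the stated absolute constant $c$; the looser constant obtained above is already enough for the downstream independence-testing application.
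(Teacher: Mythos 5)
Your proof is correct and follows the same overall template as the paper's: define $D=\p-\p_1\otimes\p_2$, observe the double zero-marginal structure, reduce $Z\eqdef \p(S_1\times S_2)-\p_1(S_1)\p_2(S_2)$ to a \emph{homogeneous} bilinear form $\frac14\sum_{i,j}D(i,j)\theta_i\theta'_j$ in $2\ab$ Rademacher variables, compute $\expect{Z^2}=\normtwo{D}^2/16$ and lower-bound it via Cauchy--Schwarz, and close with Paley--Zygmund on $Z^2$. Your identity $f=\sum_{x,y}D(x,y)\tilde I_x\tilde J_y$ is exactly the paper's \cref{theorem:random:product:subsets} representation~\eqref{eqn:z-normalized} up to notation (your $\tilde I_x$ is $\theta_x/2$).

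The one genuine divergence is the fourth-moment bound. The paper bounds $\expect{Z^4}$ by a Chernoff/MGF argument that controls $\log\shortexpect[e^{\lambda Z}]$ via the spectral radius $\rho(\delta^T\delta)\leq\norm{\delta}_F^2$, then optimizes $\lambda$. You instead observe that $Z$ is a degree-$2$ multilinear form and invoke $(4,2)$-hypercontractivity (Bonami--Beckner) to get $\expect{Z^4}\leq 81\,\expect{Z^2}^2$. Both routes are sound. Your approach is more modular and gives a slightly better absolute constant ($1/(4\cdot 81)=1/324$ vs.\ the paper's $1/4096$), at the cost of invoking hypercontractivity; the paper's MGF route is self-contained and extends more readily to the non-product ($L>2$) setting it addresses elsewhere.

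One bookkeeping remark worth flagging: your final threshold comes out as $c'\dst^2/\ab^2$, not $\dst^2/(8\ab)$ as written in the theorem. This is not a gap in your proof --- the paper's own argument (applying \cref{theorem:random:product:subsets} with $\alpha=1/32$ and $\norm{\delta}_F^2\geq 4\dst^2/\ab^2$) also yields $\dst^2/(8\ab^2)$, and the downstream use requires $\dst'=\Theta(\dst/\ab)$ to land on the claimed $O(\ab^2/(\dst^2\priv^2))$ sample complexity. The $\ab$ in the statement's denominator is evidently a typo for $\ab^2$, and your computation lands on the correct exponent.
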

We emphasize that Theorem~\ref{theo:random:subset:product} is not a direct consequence
of~\cref{theo:random:subset}, due to the product structure of the
random subset $S_1\times S_2$ (while the previous theorem would apply
to a random subset $S\subseteq [\ab]\times[\ab]$). And indeed,
proving~\cref{theo:random:subset:product} requires the following
hashing lemma, proven in a fashion similar
to~\cite[Theorem~A.6]{ACT:19}:
\begin{theorem}[Joint Probability Perturbation Hashing]
  \label{theorem:random:product:subsets}
Consider a matrix $\delta\in\R^{\ab\times \ab}$ such that, for every
$i_0,j_0\in[\ab]$, $\sum_{j\in [\ab]}\delta_{i_0,j}=\sum_{i\in
  [\ab]}\delta_{i,j_0} = 0$. Let random variables $X=(X_1, \dots,
X_\ab)$ and $Y=(Y_1, \dots, Y_\ab)$ be independent and uniformly
distributed over $\ab$-length binary sequences.  Define $Z =
\sum_{(i,j)\in [\ab]\times[\ab]}\delta_{ij}X_iY_j$.  Then, for every
$\alpha\in(0,1/16)$, there exists a constant $c_\alpha>0$ such that
\[
\probaOf{Z^2\geq \alpha\norm{\delta}_F^2}\geq c_\alpha.
\]
\end{theorem}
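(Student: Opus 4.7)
The plan is to reduce the bilinear form $Z$ to a centered Rademacher chaos of degree two and then apply a Paley--Zygmund-type anticoncentration argument backed by a hypercontractive fourth-moment bound.

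First, I would center the inputs. Let $\sigma_i \eqdef 2X_i-1$ and $\tau_j \eqdef 2Y_j-1$, so that $(\sigma_i)_{i\in[\ab]}$ and $(\tau_j)_{j\in[\ab]}$ are independent families of $\pm 1$ Rademacher random variables. Then $X_i Y_j = \frac{1}{4}(1+\sigma_i)(1+\tau_j) = \frac{1}{4}(1 + \sigma_i + \tau_j + \sigma_i\tau_j)$. Substituting and using the zero row/column sum hypothesis on $\delta$ (which kills the constant, the $\sigma_i$-only, and the $\tau_j$-only terms), one obtains
\[
Z = \tfrac{1}{4}\sum_{i,j\in[\ab]}\delta_{ij}\sigma_i\tau_j \eqdef \tfrac{1}{4}W.
\]

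Next, I would estimate the second and fourth moments of $W$. Independence and orthogonality of Rademachers give $\expect{\sigma_i\sigma_{i'}\tau_j\tau_{j'}} = \indic{i=i'}\indic{j=j'}$, so $\expect{W^2} = \sum_{i,j}\delta_{ij}^2 = \norm{\delta}_F^2$, and thus $\expect{Z^2} = \norm{\delta}_F^2/16$. For the fourth moment, $W$ is a mean-zero multilinear polynomial of degree two in $2\ab$ independent Rademacher variables, so the Bonami--Beckner hypercontractive inequality $\norm{f}_4 \leq 3^{d/2}\norm{f}_2$ for degree-$d$ Rademacher chaos, specialized to $d=2$, yields $\expect{W^4} \leq 81\,(\expect{W^2})^2 = 81\,\norm{\delta}_F^4$. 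If one wanted to avoid citing hypercontractivity, the same estimate can be obtained by expanding $\expect{W^4}$ as a sum over index tuples $(i_1,j_1,\dots,i_4,j_4)$, observing that only tuples pairing indices contribute, and bounding each matching by $\norm{\delta}_F^4$ via Cauchy--Schwarz.

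Finally, I would apply the Paley--Zygmund inequality to $W^2$: for any $\theta\in[0,1)$,
\[
\probaOf{W^2 \geq \theta\,\expect{W^2}} \geq (1-\theta)^2\,\frac{(\expect{W^2})^2}{\expect{W^4}} \geq \frac{(1-\theta)^2}{81}.
\]
Setting $\theta = 16\alpha$ (which is in $[0,1)$ precisely when $\alpha < 1/16$) and recalling $Z^2 = W^2/16$ gives
\[
\probaOf{Z^2 \geq \alpha\norm{\delta}_F^2} = \probaOf{W^2 \geq 16\alpha\,\expect{W^2}} \geq \frac{(1-16\alpha)^2}{81},
\]
so the conclusion holds with $c_\alpha \eqdef (1-16\alpha)^2/81$. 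The main conceptual step is the fourth-moment bound---if one insists on an explicit elementary proof rather than invoking Bonami--Beckner, the bookkeeping in the direct expansion is the only slightly delicate part; everything else is mechanical. This is consistent with the constant $c = 1/4096$ claimed for~\cref{theo:random:subset:product}, which would follow by taking $\alpha$ a small fraction of $1/16$ and absorbing additional losses incurred when specializing this lemma to the testing application.
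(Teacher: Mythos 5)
Your proof is correct and follows the paper's structure almost exactly: center via $\sigma_i = 2X_i-1$, $\tau_j = 2Y_j-1$, use the zero row/column sums to reduce $Z$ to the pure quadratic chaos $W/4$ with $W = \sum_{ij}\delta_{ij}\sigma_i\tau_j$, compute $\expect{Z^2} = \norm{\delta}_F^2/16$, bound $\expect{Z^4}$, and apply Paley--Zygmund. The one genuinely different step is the fourth-moment bound: the paper obtains $\expect{Z^4} \le 4\norm{\delta}_F^4$ (i.e.\ $\expect{W^4}\le 1024\norm{\delta}_F^4$) via an MGF estimate for the bilinear form, bounding $\log\expect{e^{\lambda Z}}$ and then using $\expect{Z^4}\le (24/\lambda^4)\expect{e^{\lambda Z}}$ with an optimized $\lambda$; you instead invoke $(2,4)$-hypercontractivity for degree-$2$ Rademacher chaos, giving $\expect{W^4}\le 81\,\norm{\delta}_F^4$. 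Your route is both cleaner and yields a strictly better constant: $c_\alpha = (1-16\alpha)^2/81$ versus the paper's $(1-16\alpha)^2/1024$, so at $\alpha=1/32$ you would get $c = 1/324$ rather than the paper's $c=1/4096$. (Your closing remark that your bound is merely ``consistent with'' $1/4096$ undersells it --- it improves it.) What the paper's MGF route buys in exchange is self-containedness within the framework already developed (it reuses the sub-Gaussian chaos MGF bound cited from~\cite{AcharyaCT:IT1}), and it extends readily to situations where one wants tail bounds rather than just a fourth moment; the hypercontractive route is more standard and, for the purpose of Paley--Zygmund alone, preferable. One small caution: your alternative suggestion of proving the fourth-moment bound by direct combinatorial expansion is plausible but not fleshed out; the matching-based bookkeeping (with Cauchy--Schwarz on each pairing) does work, but I would not assert the constant $81$ from that route without carrying out the count, since the direct expansion naturally produces a different intermediate constant. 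As stated, invoking Bonami--Beckner is rigorous and sufficient.
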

The proof of this theorem is deferred
to~\cref{app:concentration:bivariate}. We now show how this
implies~\cref{theo:random:subset:product}:
\begin{proofof}{\cref{theo:random:subset:product}}
Let $\p$ be as in the statement. Item (i) is from the
definition. We just focus on proving item (ii). Define
$\delta\in\R^{\ab\times\ab}$ by $\delta_{ij} = \p(i,j)-\p_1(i)\p_2(j)$
for $i,j\in[\ab]$. Since $\p$ has marginals $\p_1,\p_2$, $\delta$
satisfies the assumptions of~\cref{theorem:random:product:subsets}, we
can apply the theorem, observing that if $X$ (resp. $Y$) is the
indicator vector of the set $S_1$ (resp. $S_2$) then
\[
    Z = \sum_{(i,j)\in [\ab]\times[\ab]}(\p(i,j)-\p_1(i)\p_2(j))X_iY_j
    = \p(S_1\times S_2)-\p_1(S_1)\p_2(S_2)\,,
\]
and that $ \norm{\delta}_F^2 = \normtwo{\p-\p_1\otimes\p_2}^2 \geq
\frac{4\dst^2}{\ab^2} $ (the inequality being Cauchy--Schwarz). Taking 
$\alpha=1/32$ yields the result.
\end{proofof}

It only remains to describe an LDP independence testing protocol for $\ab=2$.
Note that
while we can set $\ab=2$ in the protocol of~\cref{theo:independence:private}, it leads to a complicated protocol. We instead provide a simple test for $\ab=2$.        
\begin{lemma}[Locally Private Bias Estimation]
      \label{lem:binary-testing:independence}
Let $\priv\in(0,1]$. There exists a private-coin $\priv$-LDP protocol
  for $(2,2,\dst,\delta)$-independence testing using one bit of
  communication per player and
  $\ns=\bigO{\frac{1}{\dst^2\priv^2}\log\frac{1}{\delta}}$ players.
\end{lemma}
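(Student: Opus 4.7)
The plan is to reduce the $2{\times}2$ independence test to three one-dimensional Bernoulli bias estimations, each of which can be handled by the randomized-response protocol of~\cref{lem:binary-testing}. Writing $\mu_1 \eqdef \expect{X^{(1)}}$, $\mu_2 \eqdef \expect{X^{(2)}}$, and $\mu_{12} \eqdef \expect{X^{(1)} X^{(2)}}$ for $(X^{(1)},X^{(2)})\sim \p$, and denoting $p_{ab}\eqdef \p(a,b)$, a direct $2{\times}2$ calculation gives $p_{ab} - \p_1(a)\p_2(b) = (-1)^{a+b}(p_{00}p_{11}-p_{01}p_{10})$, whence
\[
    |\mu_{12} - \mu_1 \mu_2| \;=\; |p_{00}p_{11}-p_{01}p_{10}| \;=\; \tfrac{1}{2}\totalvardist{\p}{\p_1\otimes\p_2}.
\]
Thus under the null hypothesis this quantity is $0$, while under the alternative it is at least $\dst/2$ (since any product distribution is at least $\dst$-far from $\p$, and $\p_1\otimes\p_2$ is a product distribution).

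Based on this, I would partition the $\ns$ players into three equal-size groups $G_1,G_2,G_{12}$. Each player in $G_1$ (resp.\ $G_2$) sends a $\priv$-LDP randomized response of their bit $X^{(1)}_i$ (resp.\ $X^{(2)}_i$); each player in $G_{12}$ sends a $\priv$-LDP randomized response of the bit $X^{(1)}_i X^{(2)}_i\in\{0,1\}$. Every player communicates a single bit and touches their sample only through a $\priv$-LDP channel, so the overall protocol is $\priv$-LDP and one-bit. By the estimation guarantee underlying~\cref{lem:binary-testing} (the debiased empirical mean of a randomized-response transcript), with $\ns = \bigO{\dst^{-2}\priv^{-2}\log(1/\delta)}$ players in total the referee obtains estimators $\hat\mu_1,\hat\mu_2,\hat\mu_{12}$ each satisfying $|\hat\mu - \mu|\leq \dst/16$ with probability at least $1-\delta/3$.

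The referee then outputs \accept iff $|\hat\mu_{12} - \hat\mu_1 \hat\mu_2| \leq \dst/4$. Correctness follows from a plug-in/triangle-inequality argument: since $\mu_1,\mu_2,\hat\mu_1,\hat\mu_2\in[0,1]$,
\[
    \bigl|(\hat\mu_{12}-\hat\mu_1\hat\mu_2) - (\mu_{12}-\mu_1\mu_2)\bigr| \;\leq\; |\hat\mu_{12}-\mu_{12}| + |\hat\mu_1-\mu_1| + |\hat\mu_2-\mu_2| \;\leq\; \tfrac{3\dst}{16},
\]
so the threshold $\dst/4$ cleanly separates the value $0$ (null) from values of magnitude $\geq \dst/2$ (alternative), and a union bound over the three estimates yields failure probability at most $\delta$. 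There is no real obstacle: the reduction "independence over $\{0,1\}\times\{0,1\}\iff$ covariance of two Bernoullis" is specific to $\ab=2$, and the rest is constant-tracking for the plug-in estimator.
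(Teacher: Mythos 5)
Your proposal is correct and is essentially the same proof as the paper's: the paper estimates $\p(0,0)$, $\p_1(0)$, $\p_2(0)$ via one-bit randomized response and thresholds $|\tilde\p(0,0)-\tilde\p_1(0)\tilde\p_2(0)|$ at $\dst/4$, using the identity $|\p(a,b)-\p_1(a)\p_2(b)|=|p_{00}p_{11}-p_{01}p_{10}|=\tfrac12\totalvardist{\p}{\p_1\otimes\p_2}$; your $\mu_{12},\mu_1,\mu_2$ are exactly $\p(1,1),\p_1(1),\p_2(1)$, so the decomposition into three RR bias estimations, the $\dst/16$ accuracy, and the plug-in/triangle-inequality threshold argument all coincide with the paper's reasoning up to the harmless symmetry $0\leftrightarrow 1$.
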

\begin{proof}
Consider a distribution $\p$ over $\{0,1\} \times \{0,1\}$ with
marginals $\p_1$ and $\p_2$.  We use the fact that
\[
\abs{ \p(0,0)-\p_1(0)\p_2(0)|=|\p(x,y)-\p_1(x)\p_2(y) }, \quad x,y\in
\{0,1\},
\]
which holds since
\[
\abs{ \p(0,1)-\p_1(0)\p_2(1)}= \abs{(\p_1(0) - \p(0,0))-\p_1(0)(1 - \p_2(0))} = \abs{\p_1(0) \p_2(0) - \p(0,0)}.
\]
Thus, if $\p$ is $\dst$-far in total variation distance from every
product distribution, it must hold that
$\totalvardist{\p}{\p_1\otimes\p_2}\geq \dst$, which in view of the
equation above yields $|\p(0,0)-\p_1(0)\p_2(0)|\geq \dst/2$.  Using
this observation, we can test for independence using
$O(1/(\priv^2\dst^2))$ samples as follows.

The $\ns$ players are partitioned in $3$ sets $A,B,C$ of size
$\ns/3$. Since, for any symbol $(x,y)$, $\p(x,y)$ (resp. $\p_1(x)$,
$\p_2(y)$) can be estimated up to accuracy $\dst$ by converting the
observation $(X,Y)$ to the binary observation $\indic{(X,Y)=(x,y)}$
(resp. $\indic{X=x}$, $\indic{Y=y}$) and proceeding as in
\cref{lem:binary-testing}, we can estimate $\p(0,0)$, $\p_1(0)$, and
$\p_2(0)$ up to an additive accuracy $\dst/16$ by assigning
$|A|=|B|=|C|=O(1/(\priv^2\dst^2)\log(1/\delta))$ players for each of
them, so that the three estimates are simultaneously accurate with
probability at least $1-\delta$. Denote these estimates by
$\tilde{\p}(0,0)$, $\tilde{\p}_1(0)$, and $\tilde{\p}_2(0)$,
respectively. When $\p(0,0)=\p_1(0)\p_2(0)$,
\[
|\tilde{\p}(0,0)-\tilde{\p}_1(0)\tilde{\p}_2(0)|\leq
|\tilde{\p}(0,0)-\p(0,0)|+ \tilde{\p}_1(0)-\p_1(0)|+
|\tilde{\p}_2(0)-\p_2(0)| \leq \frac 3 {16} \dst.
\]
On the other hand, when $|\p(0,0)-\p_1(0)\p_2(0)|\geq \dst/2$, we have
\[
|\tilde{\p}(0,0)-\tilde{\p}_1(0)\tilde{\p}_2(0)| \geq
|\p(0,0)-\p_1(0)\p_2(0)|- |\tilde{\p}(0,0)-\p(0,0)|-
|\tilde{\p}_1(0)-\p_1(0)|- |\tilde{\p}_2(0)-\p_2(0)| \geq
\frac{5}{16}\dst.
\]
Thus, it is sufficient for the referee to form the estimates
$\tilde{\p}(0,0)$, $\tilde{\p}_1(0)$, and $\tilde{\p}_2(0)$ and
compare $\abs{ \tilde{\p}(0,0)-\tilde{\p}_1(0)\tilde{\p}_2(0) }$ to
the threshold $\dst/4$.
\end{proof}

We summarize the overall algorithm and its performance below.
\begin{algorithm}[ht]
\begin{algorithmic}[1]
\Require Privacy parameter $\priv>0$, distance parameter
$\dst\in(0,1)$, $\ns$ players \State Set
\[
    c \gets \frac{1}{4096}\, \qquad \delta_0 \gets
    \frac{c}{2(1+c)}, \qquad \dst' \gets \frac{\dst}{\sqrt{8\ab}},
    \qquad T = \Theta(1), \qquad m \gets \frac{\ns}{3T}.
\]
\State Partition the players in $3T$ groups
$B_{1,1},B_{1,2},B_{1,3},B_{2,1},B_{2,2},B_{2,3},\dots,B_{T,1},B_{T,2},B_{T,3}$
of $m$ players \For{$t$ from $1$ to $T$} \Comment{In parallel} \State
Players in $B_{t,1}\cup B_{t,2}\cup B_{t,3}$ generate uniformly at
random two common subsets $S_{1,t},S_{2,t}\subseteq [\ab]$.
\ForAll{$i\in B_{t,1}$} \State Player $i$ converts their sample
$(X_i,Y_i)$ to $X'_i\eqdef \indic{(X_i,Y_i)\in S_{t,1}\times
  S_{t,2}}$.  \EndFor \ForAll{$i\in B_{t,2}$} \State Player $i$
converts their sample $(X_i,Y_i)$ to $X'_i\eqdef \indic{X_i\in
  S_{t,1}}$.  \EndFor \ForAll{$i\in B_{t,3}$} \State Player $i$
converts their sample $(X_i,Y_i)$ to $X'_i\eqdef \indic{Y_i\in
  S_{t,2}}$.  \EndFor \State Players in $B_{t,1}\cup B_{t,2}\cup
B_{t,3}$ (and the referee) run the protocol
from~\cref{lem:binary-testing} on the samples $(X'_i)_{i\in
  B_{t,1}\cup B_{t,2}\cup B_{t,3}}$ to test identity of
$\p(S_{t,1}\times S_{t,2})$ to $\p_1(S_{t,1})\p_1(S_{t,2})$, with
distance parameter $\dst'$ and failure probability $\delta_0$ \EndFor
\LineComment{At the referee} \State Let $\tau$ denote the fraction of
the $T$ protocols that returned $\accept$ \If{$\tau > 1- (\delta_0 + \frac{c}{4})$}
  \State \Return \accept
\Else
  \State \Return \reject
\EndIf
\end{algorithmic}
\caption{Locally Private Independence Testing (Public-coin)}\label{alg:optimal:independence}
\end{algorithm}

\begin{theorem}
  \label{theo:independence:public:onebit:restated}
For every $\ab\geq 1$ and $\priv\in(0,1]$, there exists a public-coin $\priv$-LDP protocol for $(\ab,\dst,\delta)$-independence testing using one bit of communication per player and $\ns=\bigO{\frac{\ab^2}{\dst^2\priv^2}\log\frac{1}{\delta}}$ players.
\end{theorem}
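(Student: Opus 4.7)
The plan is to mirror the analysis of~\cref{theo:identity:public:onebit:restated}, but using~\cref{theo:random:subset:product} (in place of~\cref{theo:random:subset}) to reduce testing independence over $[\ab]\times[\ab]$ to testing independence over $\{0,1\}\times\{0,1\}$, and~\cref{lem:binary-testing:independence} (in place of~\cref{lem:binary-testing}) as the binary sub-routine. First, note that~\cref{alg:optimal:independence} is $\priv$-LDP and one-bit-per-player: every player's output is produced by the $\priv$-LDP one-bit binary independence testing protocol of~\cref{lem:binary-testing:independence}, applied to a deterministic function of their sample and the public randomness. Since the common seed is public (available to the referee), requiring $\priv$-LDP at the per-player level is precisely what guarantees the overall $\priv$-LDP property.

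Next, I would analyze the success probability of a single index $t\in[T]$. Write $b_t\in\{0,1\}$ for the indicator that the $t$-th sub-protocol returns $\accept$, and condition on the uniformly random pair $(S_{t,1},S_{t,2})$. The hashed pair $(\indic{X\in S_{t,1}},\indic{Y\in S_{t,2}})\in\{0,1\}\times\{0,1\}$ has joint probability $\p(S_{t,1}\times S_{t,2})$ at $(1,1)$ and marginals $\p_1(S_{t,1}),\p_2(S_{t,2})$; thus, under $H_0$ (i.e.\ $\p=\p_1\otimes\p_2$), part (i) of~\cref{theo:random:subset:product} gives $\p(S_{t,1}\times S_{t,2})=\p_1(S_{t,1})\p_2(S_{t,2})$ almost surely, so by~\cref{lem:binary-testing:independence} we have $\probaOf{b_t=1}\geq 1-\delta_0$. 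Under $H_1$ (i.e.\ $\p$ is $\dst$-far from every product), part (ii) of~\cref{theo:random:subset:product} implies that with probability at least $c$, the hashed distribution has independence gap at least $\dst/\sqrt{8\ab}=\dst'$; as shown in the proof of~\cref{lem:binary-testing:independence}, this in turn implies $\dtv$-distance at least $2\dst'$ from its product of marginals, so the sub-protocol on this rescaled problem rejects with probability at least $1-\delta_0$. Consequently $\probaOf{b_t=1}\leq (1-c)+c\delta_0 = 1-c(1-\delta_0)\leq 1-(\delta_0+c/2)$ by the choice $\delta_0=c/(2(1+c))$.

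Having bounded $\expect{b_t}$ in both cases by quantities separated by $c/2$, I would invoke a Chernoff bound across the $T=\Theta(1/c^2)=\Theta(1)$ independent parallel repetitions (independent since each uses its own disjoint group of players and its own fresh public bits) to show that thresholding $\tau=\frac{1}{T}\sum_t b_t$ at $1-(\delta_0+c/4)$ distinguishes the two cases with probability at least $2/3$. To amplify to probability $1-\delta$, repeat the whole protocol on $O(\log(1/\delta))$ disjoint batches of players and take a majority vote. Finally, the sample budget for one call to~\cref{lem:binary-testing:independence} with distance parameter $\dst'=\dst/\sqrt{8\ab}$ and constant failure $\delta_0$ is $\bigO{1/(\dst'^2\priv^2)}=\bigO{\ab/(\dst^2\priv^2)}$, and the total number of players is $3T$ times this, times the $O(\log(1/\delta))$ amplification factor, giving the claimed $\ns=\bigO{\frac{\ab}{\dst^2\priv^2}\log\frac{1}{\delta}}$---wait, with the reduction to dimension two the hashed gap is $\dst'^2=\dst^2/(8\ab)$, so the per-subtest cost is actually $\bigO{\ab/(\dst^2\priv^2)}$; summing this over the $\Theta(1)$ subtests and $O(\log(1/\delta))$ amplification rounds gives exactly $\ns=\bigO{\frac{\ab}{\dst^2\priv^2}\log\frac{1}{\delta}}$, and since the theorem statement cites the bound $\bigO{\ab^2/(\dst^2\priv^2)\log(1/\delta)}$ one sees that there is comfortable slack.

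The main obstacle is neither the aggregation nor the LDP check, but the product-structured hashing step: one needs a statement like~\cref{theo:random:subset:product} that preserves \emph{independence gap} (not just total variation) under \emph{independent} uniform random choices of $S_1,S_2$ from $[\ab]$, with a constant success probability. This is precisely what~\cref{theorem:random:product:subsets} (anti-concentration of a bilinear form with zero row/column sums in Rademacher inputs) delivers, and the fact that it is given ahead of this theorem removes the only real obstacle from the plan.
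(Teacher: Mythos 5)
Your plan mirrors the paper's approach exactly---public-coin hashing to $\{0,1\}\times\{0,1\}$ via two independent random subsets $S_1,S_2$, invoking \cref{theo:random:subset:product} for the anti-concentration of the independence gap under the hash, the binary subroutine of \cref{lem:binary-testing:independence}, and a Chernoff aggregation over $T=\Theta(1)$ parallel repetitions followed by $O(\log(1/\delta))$ amplification---and that outline is correct. However, your sample-complexity arithmetic is off by a factor of $\ab$, and the way you dismissed this as ``comfortable slack'' is a red flag you should have followed up on: if your claimed $\bigO{\ab/(\dst^2\priv^2)\log(1/\delta)}$ were right, it would \emph{beat} the matching public-coin lower bound $\bigOmega{\ab^2/(\dst^2\priv^2)}$ proved in \cref{corollary:independence:uniformity:reduction}, a contradiction.

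The culprit is the displayed bound in \cref{theo:random:subset:product} as stated, which is inconsistent with its own proof. There, $\delta\in\R^{\ab\times\ab}$ has $\normone{\delta}=2\totalvardist{\p}{\p_1\otimes\p_2}>2\dst$, so Cauchy--Schwarz over $\ab^2$ entries gives $\norm{\delta}_F^2\geq 4\dst^2/\ab^2$; applying \cref{theorem:random:product:subsets} with $\alpha=1/32$ then yields $Z^2 > \dst^2/(8\ab^2)$ with constant probability, not $\dst^2/(8\ab)$. The algorithm's pseudocode setting $\dst'\gets\dst/\sqrt{8\ab}$ carries the same slip; the correct distance parameter passed to \cref{lem:binary-testing:independence} should be $\dst'=\dst/(\sqrt{8}\,\ab)$. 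Plugging this in gives a per-subtest cost of $\bigO{1/(\dst'^2\priv^2)}=\bigO{\ab^2/(\dst^2\priv^2)}$, and summing over the $\Theta(1)$ subtests and $O(\log(1/\delta))$ amplification rounds gives $\ns=\bigO{\ab^2/(\dst^2\priv^2)\log(1/\delta)}$ as claimed, with no slack at all. Your proof strategy is sound once this constant is corrected, but as written the numerology is wrong and the purported improvement is illusory.
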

\begin{proof}
  The proof proceeds as follows:
   Using the public randomness, the players select two uniformly random subsets $S_1,S_2\subseteq [\ab]$, and from their samples allow the referee to estimate the quantities $\p(S_1\times S_2)$, $\p_1(S_1)$, and $\p_2(S_2)$. By~\cref{theo:random:subset:product}, this in turn is enough to detect (with constant probability over the choice of $S_1,S_2$) if $\p$ is far from $\p_1\otimes\p_2$; it then suffices to repeat this in parallel on disjoint groups of players in order to amplify the probability of success.

To wit,  the proof of correctness follows the foregoing outline, which we describe in more detail. Let $c\eqdef 1/4096$ be the constant from~\cref{theo:random:subset:product}, let $\delta_0 \eqdef \frac{c}{2(1+c)}$, and set $\dst' \eqdef \frac{\dst}{\sqrt{8\ab}}$.
  
Consider the $t$-th test from~\cref{alg:optimal:independence} (where $1\leq t\leq T$), and let $b_t$ be the indicator that the protocol run by players in $B_t$ returned \accept. If $\p=\p_1\otimes\p_2$, then by the above we have $\probaOf{ b_t = 1 } \geq 1- \delta_0$ (where the probability is over the choice of the random subsets $S_{t,1}$ and $S_{t,2}$, and the randomness of protocol from~\cref{lem:binary-testing:independence}). However, if  $\p$ is $\dst$-far from $\p_1\otimes\p_2$, by~\cref{theo:random:subset:product} it it the case that $\probaOf{ b_t = 1 } \leq (1-c) + c\delta_0 = 1- (\delta_0+\frac{c}{2})$. Therefore, for a sufficiently large constant in the choice of $T = \Theta(1/c^2) = \Theta(1)$, a Chernoff bound argument ensures that we can distinguish between these two cases with probability at least $2/3$.
\end{proof}

  \subsection{Lower bounds}
\label{sec:lower-independence}
The following theorem proves the tightness of our upper bounds for independence testing. 
\begin{theorem}
  \label{corollary:independence:uniformity:reduction}
For every $\ab\geq 1$ and $\priv\in(0,1]$, every private-coin (resp., public-coin) $\priv$-LDP protocol for $(\ab,\dst,1/12)$-independence testing must have $\bigOmega{\frac{\ab^{3}}{\dst^2\priv^2}}$ players (resp., $\bigOmega{\frac{\ab^{2}}{\dst^2\priv^2}}$ players).
\end{theorem}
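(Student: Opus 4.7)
The plan is to derive both lower bounds from a single reduction of uniformity testing on $[\ab^2]$ to independence testing on $[\ab]\times[\ab]$, invoking the known $\priv$-LDP uniformity testing lower bounds of~\cite{AcharyaCT:IT1}. Those bounds state that testing uniformity on a domain of size $N$ requires $\bigOmega{N^{3/2}/(\dst^2\priv^2)}$ players in the private-coin setting and $\bigOmega{N/(\dst^2\priv^2)}$ in the public-coin setting; substituting $N=\ab^2$ yields the two target bounds. To preserve the coin-sharing structure across the reduction (so that a private-coin independence tester begets a private-coin uniformity tester, and likewise for public-coin), we will make the reduction itself deterministic.

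Concretely, we will fix a bijection $\phi:[\ab^2]\to[\ab]\times[\ab]$ and convert any $\priv$-LDP independence testing protocol $\pi$ on $[\ab]\times[\ab]$ into an identity tester for $\uniformOn{\ab^2}$: each player locally maps its sample $X_i\in[\ab^2]$ to $\phi(X_i)$ and then participates in $\pi$ on this transformed sample. Since $\phi$ is deterministic, both the privacy parameter and the randomness structure of $\pi$ carry over unchanged. Completeness is immediate: if the source distribution is $\uniformOn{\ab^2}$, its pushforward is $\uniformOn{\ab}\otimes\uniformOn{\ab}$, which is product, so $\pi$ will accept with probability $\geq 1-\frac{1}{12}$.

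The soundness direction is the main obstacle, since a distribution on $[\ab^2]$ that is $\dst$-far from uniform need not push forward to a distribution far from \emph{every} product (e.g.\ $\mathbf{q}\otimes\uniformOn{\ab}$ is always product but generally far from $\uniformOn{\ab^2}$). We will handle this by showing that the lower bound of~\cite{AcharyaCT:IT1} already applies against a restricted hard family $\{\mathbf{p}_{\vec{s},\vec{t}}\}$ whose elements, seen through $\phi$, have \emph{uniform marginals}---for example, the rank-one sign perturbations
\[
  \mathbf{p}_{\vec{s},\vec{t}}(i,j)\eqdef\frac{1+2\dst\,s_i t_j}{\ab^2},\qquad (i,j)\in[\ab]\times[\ab],
\]
with $\vec{s},\vec{t}\in\{\pm1\}^\ab$ balanced. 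A short calculation shows that $\mathbf{p}_{\vec{s},\vec{t}}$ is a valid distribution with $\totalvardist{\mathbf{p}_{\vec{s},\vec{t}}}{\uniformOn{\ab^2}}=\dst$ and both marginals equal to $\uniformOn{\ab}$. Combined with the elementary triangle-inequality fact $\inf_{\mathbf{q}_1,\mathbf{q}_2}\totalvardist{\mathbf{p}}{\mathbf{q}_1\otimes\mathbf{q}_2}\geq\frac{1}{3}\totalvardist{\mathbf{p}}{\mathbf{p}_1\otimes\mathbf{p}_2}$, this shows each $\mathbf{p}_{\vec{s},\vec{t}}$ is $\bigOmega{\dst}$-far from every product distribution, so the independence tester $\pi$ must reject with probability $\geq 1-\frac{1}{12}$ on these inputs.

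Putting it together, the reduction produces a uniformity tester on $[\ab^2]$ for the hard pair $(\uniformOn{\ab^2},\{\mathbf{p}_{\vec{s},\vec{t}}\})$ with $\ns$ players, the same privacy parameter, and the same coin-sharing structure as $\pi$. The lower bounds of~\cite{AcharyaCT:IT1} then give $\ns=\bigOmega{\ab^3/(\dst^2\priv^2)}$ in the private-coin case and $\ns=\bigOmega{\ab^2/(\dst^2\priv^2)}$ in the public-coin case, after absorbing a constant factor coming from the ratio between TV distance from uniform and TV distance from the product hull. The step deserving the most care is verifying that the $\priv$-LDP lower-bound construction of~\cite{AcharyaCT:IT1}---which is typically phrased for a Paninski-style pairwise perturbation---goes through against the rank-one family $\{\mathbf{p}_{\vec{s},\vec{t}}\}$; I expect this to be the main obstacle, though the underlying chi-square second-moment argument should be robust enough to absorb this change (either verbatim or via a minor variant of the hard family that both has uniform marginals and slots directly into their analysis).
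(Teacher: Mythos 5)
Your reduction skeleton is the right one—as in the paper, the lower bound is obtained by showing that an independence tester can be used to test uniformity on $[\ab^2]$ against a Paninski-style hard family, and then invoking the $\priv$-LDP uniformity lower bounds of~\cite{AcharyaCT:IT1}. You also correctly identify the crux: after the reduction, the hard instances must have uniform marginals so that they land $\bigOmega{\dst}$-far from the product hull (via the triangle-inequality fact you cite, which the paper also uses as~\cref{fact:independence:marginals}). But the way you resolve that crux has a genuine gap.

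You propose swapping the Paninski family for the rank-one family $\mathbf{p}_{\vec{s},\vec{t}}(i,j)=\frac{1+2\dst s_i t_j}{\ab^2}$ and assert that the chi-square machinery of~\cite{AcharyaCT:IT1} ``should be robust enough to absorb this change.'' This is exactly the point where the argument breaks down. The private-coin $\bigOmega{N^{3/2}/(\dst^2\priv^2)}$ bound there is proved against the Paninski family on $[N]=[\ab^2]$, which has $\ab^2/2$ \emph{independent} perturbation signs; the decoupled chi-square computation exploits both that dimensionality ($\Theta(\ab^2)$ free bits) and the coordinate-wise independence. Your rank-one family has only $O(\ab)$ parameters, and the perturbation $s_i t_j$ is heavily correlated across the $(i,j)$ pairs, so the structure the proof relies on is simply absent. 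Whether a matching lower bound even holds against a rank-one family is unclear, and establishing it would mean re-deriving the chi-square contraction bounds from scratch rather than citing~\cite{AcharyaCT:IT1}. So the proposal, as written, does not close the argument.

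The paper sidesteps this by never touching the hard family: it keeps the full Paninski family on $[\ab^2]$ and instead designs the embedding to force uniform marginals. The embedding is not a bijection onto $[\ab]\times[\ab]$ but a locally randomized map into the larger product domain $[2\ab]\times[2\ab]$: each Paninski pair $\{2i-1,2i\}$ is spread over a $2\times 4$ block $B_i$ in which the masses $\p(2i-1)$ and $\p(2i)$ are laid out in a checkerboard pattern so that every row and column of $B_i$ sums to $\frac12(\p(2i-1)+\p(2i))$. Tiling $[2\ab]\times[2\ab]$ with these blocks makes both marginals of $\Phi(\p_z)$ exactly uniform for every $z$, while the $\lp[1]$ distance to uniform is preserved since each mass is merely split four ways. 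The sample-level version of $\Phi$ (pick one of the four images of $x$ uniformly) is local post-processing, so it preserves $\priv$-LDP and the private-/public-coin structure, and the lower bounds of~\cite{AcharyaCT:IT1} then apply \emph{verbatim}. That ``spread out, don't biject'' trick is the key idea you were missing.
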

\begin{proof}
We show the following reduction, which implies our bounds for independence testing.
If there exists a private-coin (resp., public-coin) $\priv$-LDP protocol for $(\ab,\dst,1/12)$-independence testing with $\ns$ players, then there also exists a private-coin (resp., public-coin) $\priv$-LDP protocol for distinguishing the ``Paninski construction'' over $[\ab^2]$ with $\ns$ players. Recall that for even integer $\ab$ and a distance parameter $\gamma\in[0,1/2]$, the Paninski construction is a family of $2^{\ab^2/2}$ distributions  $\{\p_z\}_{z\in\bool^{\ab^2/2}}$ over $[\ab^2]$, where for $z\in\bool^{\ab^2/2}$ we have
\begin{equation}
	\label{eq:paninski}
	\p_z(x) =
	\begin{cases}
		\frac{1-2\gamma z_i}{\ab^2}, &  x=2i-1\\
		\frac{1+2\gamma z_i}{\ab^2}, & x=2i
	\end{cases},\qquad x\in[\ab^2]\,.
\end{equation}
Note that every $\p_z$ is then at total variation distance exactly $\gamma$ from $\uniformOn{\ab^2}$. 
From the lower bounds on uniformity testing already established in~\cite{AcharyaCT:IT1} (listed in \cref{table:results}), we then obtain the lower bounds on independence testing.

We first state the following useful fact (see \eg{},~\cite{BFFKRW:01}) which states that if a distribution is close to a product distribution, then it must be close to the product of its own marginals.  
\begin{fact}
  \label{fact:independence:marginals}
  Let $\p, \q\in\distribs{\Omega\times\Omega}$ with $\q$, a product distribution. If $\totalvardist{\p}{\q}\le \dst$ then
  $\totalvardist{\p}{\p_1\otimes\p_2} \leq 3\dst$.
\end{fact}
Let $\ab=2\ell$. For $z\in\bool^{2\ell^2}$, let $(\p_z)_z$ be the collection of distributions over $[4\ell^2]=[\ab^2]$ given in~\eqref{eq:paninski}, each at a distance $\gamma\eqdef3\dst$ from the uniform distribution. We construct a mapping $\Phi\colon\distribs{[\ab^2]}\to \distribs{[2\ab]\times[2\ab]}$ such that: 
\begin{enumerate}
\item\label{reduction:item:i} Both marginals of $\Phi(\p_z)$ are $\uniformOn{[2\ab]}$ for all $z$;
\item\label{reduction:item:ii} $\totalvardist{\Phi(\p_z)}{\uniformOn{[2\ab]\times[2\ab]}} = \totalvardist{\p_z}{\uniformOn{\ab^2}}$, and $\Phi(\uniformOn{\ab^2})=\uniformOn{[2\ab]\times[2\ab]}$;
\item\label{reduction:item:iii} There exists a mapping from $[\ab^2] \to [2\ab]\times[2\ab]$ that converts a sample from $\p_z$ into a sample from $\Phi(\p_z)$, and a sample from $\uniformOn{[\ab^2]}$ into a sample from $\uniformOn{[2\ab]\times[2\ab]}$.
\end{enumerate}

By~\cref{fact:independence:marginals}, for any product distribution $\q$ over $[2\ab]\times[2\ab]$,
\[
\totalvardist{\Phi(\p_z)}{\q}\ge \totalvardist{\Phi(\p_z)}{\uniformOn{[2\ab]\times[2\ab]}}/3=\totalvardist{\p_z}{\uniformOn{\ab^2}}=\dst,
\]
and the distribution $\Phi(\p_z)$ is at least $\dst$-far from any product distribution. Now, by~\cref{reduction:item:iii}, if we obtain $\ns$ samples from $\Phi(\p_z)$ for a uniformly chosen $z$, we can convert them to $\ns$ samples from $\p_z$. Therefore, any algorithm for testing independence can be used to test uniformity for the Paninski class of distributions over $[\ab^2]$, for which the lower bounds were established in~\cite{AcharyaCT:IT1}. This proves~\cref{corollary:independence:uniformity:reduction}, assuming the mapping $\Phi$.\medskip

We now describe the function $\Phi$ satisfying the three conditions. %
To each $i\in[2\ell^2]$, we associate a collection $C_i=\{a_{i,j}, b_{i,j}\}_{1\leq j\leq 4} \subset [2\ab] \times [2\ab]$ of $8$ elements and arrange them in a ``block'' $B_i$ as
\[
    B_i \eqdef \begin{bmatrix}
      a_{i,1} & b_{i,1} & a_{i,2} & b_{i,2} \\
      b_{i,3} & a_{i,3} & b_{i,4} & a_{i,4}
    \end{bmatrix},
\]
The we can see the set of $(2\ab)^2=8\cdot2\ell^2$ elements $C\eqdef \bigcup_{i=1}^{2\ell^2} C_i$ 
as a $2\ab$-by-$2\ab$ matrix $B$, comprised of the $2\ell^2$ blocks as follows:
\[
    B \eqdef \begin{bmatrix}
      B_1 & B_2 & \dots & B_\ell \\
      B_{\ell+1} & B_{\ell+2} & \dots & B_{2\ell} \\
      \vdots & \vdots & \ddots & \vdots \\
      B_{(2\ell-1)+1} & B_{(2\ell-1)+2} & \dots & B_{2\ell^2}
    \end{bmatrix}.
\]
This matrix $B$ enables us to see the target domain $[2\ab]\times[2\ab]$ as
this $2\ell$-by-$\ell$ grid of $2$-by-$4$ blocks of elements.
Explicitly, this correspondence is given by the indices
\begin{align*}
	a_{i,1} &=\Paren{2 r_i +1 , 4 c_i + 1},\,\,
b_{i,1} =\Paren{2 r_i +1 , 4 c_i + 2},
\\
a_{i,2} &=\Paren{2 r_i +1 , 4 c_i + 3},\,\,
b_{i,2} =\Paren{2 r_i +1 , 4 c_i + 4},
\\
a_{i,3} &=\Paren{2 r_i +1 , 4 c_i + 2},\,\,
b_{i,3} =\Paren{2 r_i +2 , 4 c_i + 1},
\\
a_{i,4} &=\Paren{2 r_i +1 , 4 c_i + 4},\,\,
b_{i, 4} =\Paren{2 r_i +1 , 4 c_i + 3},
\end{align*}
where $r_i = \flr{i/\ell}$ and $c_i = i \bmod \ell$, for $1\leq i \leq 2\ell^2$.
This enables us to define our mapping $\Phi\colon\distribs{[\ab^2]}\to \distribs{[2\ab]\times[2\ab]}$: given a distribution $\p$ over $[\ab^2]=[4\ell^2]$, let, for all $i \in [2\ell^2]$,
\begin{align*}
    \Phi(\p)(a_{i,1})&=\Phi(\p)(a_{i,2})=\Phi(\p)(a_{i,3})=\Phi(\p)(a_{i,4})=\frac{1}{4}\p(2i-1), \\ 
    \Phi(\p)(b_{i,1})&=\Phi(\p)(b_{i,2})=\Phi(\p)(b_{i,3})=\Phi(\p)(b_{i,4})=\frac{1}{4}\p(2i).
\end{align*}
Heuristically,
for each $1\leq i \leq 2\ell^2=\ab^2/2$,
recalling the layout of the block $B_i$,
the mapping $\Phi$ ``distributes'' the probability masses $\p(2i-1)$ and $\p(2i)$ on $8$ elements of $C_i$ as follows:
\[
    \frac{1}{4}\begin{bmatrix}
      \p(2i-1) & \p(2i) & \p(2i-1) & \p(2i) \\
      \p(2i) & \p(2i-1) & \p(2i) & \p(2i-1)
    \end{bmatrix}.
\]
This implies~\cref{reduction:item:ii}, since $\lp[1]$ distance (and thus total variation) is preserved by this transformation. It also establishes~\cref{reduction:item:iii}, as upon seeing a sample $x$ from $\p$, one can generate a sample from $\Phi(\p)$ by returning uniformly at random one of the four corresponding elements from the block $B_{\clg{x/2}}$. Thus, it only remains to show~\cref{reduction:item:i}. This in turn comes from the fact that for every $i\in[2\ell^2]$, by construction, the probabilities under $\Phi(\p)$ of each row (resp., column) of block $B_i$ sum to
$\frac{1}{2}(\p(2i-1)+\p(2i))$ (resp., $\frac{1}{4}(\p(2i-1)+\p(2i))$), which are $1/\ab^2$ and $1/2\ab^2$ respectively for $\p_z$ and $\uniformOn{[\ab^2]}$, independent of $i$.
\end{proof}

\appendices

\appendix
\section*{Proof of~\cref{lemma:identity:private:rappor:variance}}\label{app:variance:rappor}
In this section, we provide the proof of the variance bound for the \Rappor-based statistic of~\cref{ssec:identity:rappor}.
\begin{lemma}[{\cref{lemma:identity:private:rappor:variance}, restated}]
For $T$ defined as in~\cref{eq:rappor:z}, we have 
  \[
  \var[T]  \leq 2\ab\ns^2 + 5\ns^3 \alprappor^2 \normtwo{\p-\q}^2
  \leq 2\ab\ns^2 + 4\ns \expect{T}\,.
  \]
\end{lemma}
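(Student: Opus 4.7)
The plan is to re-center the counts and split the statistic into a quadratic piece and a linear piece in the centered counts. Setting $\mu_x \eqdef \alprappor\p(x)+\betrappor$, $\lambda_x \eqdef \alprappor\q(x)+\betrappor$, and $M_x \eqdef N_x - \ns\mu_x$, a straightforward algebraic rearrangement puts each summand $T_x$ into the quadratic-in-$M_x$ form
\[
T_x \;=\; M_x^2 + (2\alpha_x-1)\, M_x + \text{const},\qquad \alpha_x \eqdef \ns\mu_x-(\ns-1)\lambda_x.
\]
Hence $T - \expect{T} = A + B$ with $A \eqdef \sum_x (M_x^2 - \expect{M_x^2})$ and $B \eqdef \sum_x (2\alpha_x-1) M_x$, so that $\var[T] \le 2\var[A] + 2\var[B]$. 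Writing $M_x = \sum_{j=1}^{\ns} Z_j^{(x)}$ with $Z_j^{(x)} \eqdef Y_{j,x}-\mu_x \in [-1,1]$, \cref{fact:rappor:statistics} supplies the only joint statistics we need: for $j\ne j'$, $Z_j^{(x)}$ and $Z_{j'}^{(y)}$ are independent, while within the same player $\expect{Z_j^{(x)} Z_j^{(y)}} = -\alprappor^2 \p(x)\p(y)$ for $x\ne y$.

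The linear part $\var[B]$ is handled by the bilinear expansion $\sum_{x,y} (2\alpha_x-1)(2\alpha_y-1)\, \cov(M_x, M_y)$, using $\cov(M_x,M_x) = \ns\mu_x(1-\mu_x)\le \ns/4$ and $\cov(M_x,M_y) = -\ns\alprappor^2 \p(x)\p(y)$ for $x\ne y$. The observation that saves a factor of $\ns$ here is that $\mu_x,\lambda_x = \tfrac12 + O(\alprappor)$, so that in $2\alpha_x - 1 = 2\ns(\mu_x-\lambda_x) + (2\lambda_x-1)$ the constant-order term is actually $O(\alprappor)$, giving $|2\alpha_x - 1| = O(\ns\alprappor\, |\p(x)-\q(x)|) + O(\alprappor)$. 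Cauchy--Schwarz in $x$ then yields $\var[B] = O(\ns^3 \alprappor^2 \normtwo{\p-\q}^2)$.

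The technical heart of the proof is the quadratic part $\var[A] = \sum_{x,y\in[\ab]} \cov(M_x^2, M_y^2)$, which demands fourth-moment control on quantities that are \emph{not} independent across $x$. Expanding $M_x^2 = \sum_{j,j'} Z_j^{(x)} Z_{j'}^{(x)}$ and using across-player independence, only index patterns in which every player index appears at least twice contribute to $\cov(M_x^2,M_y^2)$, leaving a constant number of survivor types whose values are controlled by $\expect{(Z_j^{(x)})^2}\le 1/4$, $|\expect{Z_j^{(x)} Z_j^{(y)}}|\le \alprappor^2 \p(x)\p(y)$, and the trivial $|Z_j^{(\cdot)}|\le 1$. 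A careful bookkeeping gives $\cov(M_x^2,M_x^2) = O(\ns^2)$ on the diagonal and a suitably smaller (in particular, summable) contribution off the diagonal, yielding $\var[A] = O(\ab \ns^2)$. This bookkeeping is the main obstacle: naive fourth-moment bounds would overshoot by a factor of $\ns$ or $\ab$ and spoil the final sample complexity.

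Finally, the cross term is absorbed via $|\cov(A,B)|\le \sqrt{\var[A]\var[B]}$ and AM--GM into the two diagonal terms, producing the first claimed inequality $\var[T] \le 2\ab \ns^2 + 5\ns^3 \alprappor^2 \normtwo{\p-\q}^2$ after tracking constants. The second inequality is immediate from~\cref{lemma:identity:private:rappor:expect}: substituting $\expect{T} = \ns(\ns-1)\alprappor^2 \normtwo{\p-\q}^2$ gives
\[
5\ns^3 \alprappor^2 \normtwo{\p-\q}^2 \;=\; \frac{5\ns}{\ns-1}\,\expect{T} \;\le\; 4\ns \expect{T}
\]
for $\ns$ larger than an absolute constant (the small-$\ns$ regime can be absorbed into the $2\ab\ns^2$ term since $\expect{T}\le \ns^2$ trivially), completing the claim.
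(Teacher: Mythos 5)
Your decomposition is correct and genuinely different from the paper's. The paper keeps the per-coordinate statistic $g(N_x,\lambda_x)=N_x^2-(2(\ns-1)\lambda_x+1)N_x$ intact and brute-forces $\var[T']$ by evaluating all the mixed moments $\expect{N_x^aN_y^b}$ with $a+b\le 4$ via~\cref{fact:rappor:statistics}, then combines and simplifies, with the key cancellation $1-2\mu_x=\alprappor(1-2\p(x))$ appearing only implicitly in the final algebra. You instead center first, $M_x=N_x-\ns\mu_x$, and split $T-\expect{T}=A+B$ into the purely quadratic piece $A=\sum_x(M_x^2-\expect{M_x^2})$ and the linear piece $B=\sum_x(2\alpha_x-1)M_x$, handling each with tools adapted to its degree (second-moment control for $B$, fourth-moment pattern counting for $A$) and absorbing $\cov(A,B)$ by AM--GM at the cost of a factor of $2$. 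You also make explicit the two load-bearing observations---that $2\alpha_x-1=2\ns\alprappor(\p(x)-\q(x))+\alprappor(2\q(x)-1)$ has both terms $O(\alprappor)$, and that the within-player cross-covariance $\expect{Z_j^{(x)}Z_j^{(y)}}=-\alprappor^2\p(x)\p(y)$ supplies the extra $\alprappor$-factors needed to make the $\ab^2$-many off-diagonal $\cov(M_x^2,M_y^2)$ terms sum to $O(\ns^2\alprappor^4)$ rather than $\ab^2\ns^2$. The paper's route yields slightly tighter constants by avoiding the AM--GM step; yours isolates the structure of the cross-coordinate covariances more transparently. Two small caveats: your bound $\var[B]=O(\ns^3\alprappor^2\normtwo{\p-\q}^2)$ quietly drops a term of order $\ab\ns\alprappor^2$ (from the $O(\alprappor)$ constant part of $2\alpha_x-1$), which is harmless since it is absorbed by the $\ab\ns^2$ budget, but should be stated; and in the final step you wrote $5\ns^3\alprappor^2\normtwo{\p-\q}^2=\tfrac{5\ns}{\ns-1}\expect{T}$ when the correct identity is $\tfrac{5\ns^2}{\ns-1}\expect{T}$, which is not $\le 4\ns\expect{T}$ for any $\ns$. (This mirrors a presentational imprecision in the paper, where the chain through the middle expression $2\ab\ns^2+5\ns^3\alprappor^2\normtwo{\p-\q}^2$ does not literally give the final form; the end-to-end bound $\var[T]\le 2\ab\ns^2+4\ns\expect{T}$ does hold, but one must use the sharper pre-simplification coefficient $\ns(\ns-1)^2(1+4\alprappor^2)\le 4\ns^2(\ns-1)$ rather than pass through $5\ns^3$.)
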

\begin{proof}
    We let $\lambda_x \eqdef \alprappor\q(x)+\betrappor$ and $\mu_x\eqdef \frac{1}{\ns}\expect{N_x} = \alprappor\p(x)+\betrappor$ for $x\in[\ab]$.
    Dropping the constant terms from $T$, we define $T'$ such that $\var[T'] = \var[T]$ as
    \[
          T' \eqdef \sum_{x\in[\ab]} \Paren{ N_x^2 - (2(\ns-1)\lambda_x+1) N_x } = \sum_{x\in[\ab]} g(N_x,\lambda_x),
    \]
    where $g\colon[0,\infty)\times[0,1]\to\R$ is given by $g(t,\lambda) = t^2 - (2(\ns-1)\lambda+1) t $. 
The key difficulty in the analysis arises from the fact that the multiplicities of the $N_x$ terms that arise from \Rappor are correlated random variables. 
Because $g$ is not monotone in its first input, the cross covariance terms may be positive even though the $N_x$ terms are negatively associated.
 As a result, we fully expand out the variance and analyze the terms separately. 
 Recall that
\begin{equation}
  \label{eq:identity:rappor:variance:all}
    \var[T'] = \sum_{x\in[\ab]} \var[ g(N_x,\lambda_x) ] + 2\sum_{x<y} \cov( g(N_x,\lambda_x) , g(N_y,\lambda_y) ).
\end{equation}
We first analyze the sum of variances. A direct computation gives that, for every $x\in[\ab]$,
    \begin{align*}
       \var[ g(N_x,\lambda_x) ] 
          &= 2 \ns(\ns-1) \mu_x(1-\mu_x) \Paren{ \mu_x(1-\mu_x) + 2(\ns-1)(\lambda_x-\mu_x)^2}\\
          &\leq \frac{1}{8}\ns^2 + \alprappor^2\ns(\ns-1)^2\left(\p(x)-\q(x)\right)^2,
    \end{align*}
where the inequality holds since $\mu_x\in[0,1]$ so $\mu_x (1-\mu_x) \le 1/4$. It
follows that
    \begin{align}\label{eq:identity:rappor:variance:var}
        \sum_{x\in[\ab]}\var[ g(N_x, \lambda_x) ]  &\leq \frac{1}{8}\ns^2\ab + \alprappor^2\ns(\ns-1)^2\normtwo{\p-\q}^2\,.
    \end{align}
    
We now turn to the sum of the covariance terms. Fix any $x < y$ in $[\ab]$. By expanding the corresponding covariance term, we get
\begin{align}
      \cov( g(N_x,\lambda_x) , g(N_y,\lambda_y) ) 
      &= \expect{ g(N_x,\lambda_x) g(N_y,\lambda_y) } - \expect{ g(N_x,\lambda_x) }\expect{ g(N_y,\lambda_y) } \notag\\
      &=\expect{ N_x^2 N_y^2} 
      - (2(\ns-1)\lambda_y+1)\expect{N_x^2N_y}
      - (2(\ns-1)\lambda_x+1)\expect{N_xN_y^2} \notag\\
      &\qquad+ (2(\ns-1)\lambda_x+1)(2(\ns-1)\lambda_x+1)\expect{N_xN_y} \notag\\
      &\qquad- \ns^2(\ns-1)^2\mu_x\mu_y(\mu_x-2\lambda_x)(\mu_y-2\lambda_y) \label{eq:identity:rappor:cov:0}
\end{align}
since $\expect{N_x^2-(2(\ns-1)\lambda_x+1)N_x} = \ns(\ns-1)\mu_x(\mu_x-2\lambda_x)$. We then proceed by evaluating the expressions for $\expect{N_xN_y}$,
    $\expect{N_x^2N_y}$, $\expect{N_xN_y^2}$, and
    $\expect{N_x^2N_y^2}$ separately.
    
  First, by~\cref{fact:rappor:statistics}, we have that
  \begin{align}
      \expect{N_xN_y} &= \sum_{1\leq i,j\leq \ns}  \probaOf{Y_{ix}=1, Y_{jy}=1}  
       = \sum_{i=1}^\ns ( \mu_x\mu_y - \alprappor^2\p(x)\p(y)) + 2\sum_{i <j} \mu_x\mu_y   \nonumber\\
       &= \ns^2\mu_x\mu_y - \ns\alprappor^2\p(x)\p(y)
       = \ns^2\mu_x\mu_y - \ns(\mu_x-\betrappor)(\mu_y-\betrappor)
       \,.\label{eq:identity:rappor:cov:11}
  \end{align}
  Second, for $\expect{N_x^2N_y}$, we get
    \begin{align*}
        \expect{N_x^2 N_y}
        &= \sum_{1\leq i,j,\ell\leq \ns} \probaOf{ Y_{ix} = 1, Y_{jx} = 1, Y_{\ell y} = 1 } \\
        &= \ns\probaOf{ Y_{ix} = 1, Y_{i y} = 1 } + 6\binom{\ns}{3} \mu_x^2 \mu_y + 2\binom{\ns}{2}\left( \mu_x\mu_y+2\mu_x(\mu_x\mu_y-\alprappor^2\p(x)\p(y)) \right) \\
        &= \ns \mu_x\mu_y - \ns\alprappor^2\p(x)\p(y) + 6\binom{\ns}{3}\mu_x^2 \mu_y  
        + \ns(\ns-1) \mu_x\mu_y+ 4\binom{\ns}{2}\mu_x^2\mu_y - 4\binom{\ns}{2}\alprappor^2\mu_x\p(x)\p(y),
    \end{align*}
which, gathering the terms, yields
    \begin{equation}\label{eq:identity:rappor:cov:21}
        \expect{N_x^2 N_y} = \ns^2 \mu_x\mu_y - (2(\ns-1)\mu_x+1)\ns(\mu_x-\betrappor)(\mu_y-\betrappor) + \ns^2(\ns-1)\mu_x^2 \mu_y.
    \end{equation}
    The term $\expect{N_x N_y^2}$ term follows similarly.
        
  Finally, for $\expect{N_x^2N_y^2}$, note that
    \begin{align*}
        \expect{N_x^2 N_y^2} 
        &= \sum_{1\leq i,j,i',j'\leq \ns} \probaOf{ Y_{ix} = 1, Y_{jx} = 1, Y_{i' y} = 1, Y_{j' y} = 1 } \\
        &= \ns\Paren{\mu_x\mu_y-\alprappor^2\p(x)\p(y)} 
        + \binom{\ns}{2}\Paren{2\mu_x\mu_y + 4\mu_x\Paren{\mu_x\mu_y-\alprappor^2\p(x)\p(y)} \right.\\&\qquad\left.+ 4\mu_y\Paren{\mu_x\mu_y-\alprappor^2\p(x)\p(y)} + 4\Paren{\mu_x\mu_y-\alprappor^2\p(x)\p(y)}^2 } \\
        &\qquad+ \binom{\ns}{3}\Paren{ 6\mu_x^2\mu_y+6\mu_x\mu_y^2 + 24\mu_x\mu_y\Paren{\mu_x\mu_y-\alprappor^2\p(x)\p(y)}  } \\
        &\qquad + 24\binom{\ns}{4} \mu_x^2\mu_y^2
    \end{align*}
    where the second equality follows from counting the different
    possibilities for the values taken by $i,i',j,j'$; 
    we divide into cases based on the number of different values taken
    and apply~\cref{fact:rappor:statistics} for each subcase. Note
    that the total number of terms is
    $\ns+14\binom{\ns}{2}+36\binom{\ns}{3}+24\binom{\ns}{4}=\ns^4$. 
This can be simplified to
    \begin{align}
        \expect{N_x^2 N_y^2}
       &= \ns^2(\ns-1)^2 \mu_x^2 \mu_y^2 
       + \ns^2 (\ns-1) \mu_x \mu_y (\mu_x + \mu_y) 
       + \ns^2 \mu_x \mu_y \notag\\
&\qquad- 4\ns(\ns-1)^2 (\mu_x-\betrappor)(\mu_y-\betrappor) \mu_x \mu_y \notag\\
&\qquad- 2\ns (\ns-1)  (\mu_x-\betrappor)(\mu_y-\betrappor)\Paren{ \mu_x + \mu_y } \notag\\
&\qquad+ 2\ns(\ns-1) (\mu_x-\betrappor)^2(\mu_y-\betrappor)^2-\ns (\mu_x-\betrappor)(\mu_y-\betrappor)\,,  \label{eq:identity:rappor:cov:22}
    \end{align}
Plugging the bounds from~\cref{eq:identity:rappor:cov:22,eq:identity:rappor:cov:11,eq:identity:rappor:cov:21} into~\cref{eq:identity:rappor:cov:0} and simplifying, we get
\begin{align*}
      &\quad\cov( g(N_x,\lambda_x), g(N_y,\lambda_y) ) \\
      &\leq  2\ns(\ns-1)(\mu_x-\betrappor)(\mu_y-\betrappor)\Paren{ (\mu_x-\betrappor)(\mu_y-\betrappor) - 2(\ns-1)(\mu_x-\lambda_x)(\mu_y-\lambda_y) } \\
      &=  2\alprappor^4\ns(\ns-1)\p(x)\p(y)\Paren{ \p(x)\p(y) - 2(\ns-1)(\p(x)-\q(x))(\p(y)-\q(y)) }
\end{align*}
Summing over all distinct $x,y$, we have
$
    \sum_{1\leq x\neq y\leq \ab} \p(x)^2\p(y)^2 = \normtwo{\p}^2-\norm{\p}_4^4 \leq \normtwo{\p}^2
$
and
\[
  -\sum_{1\leq x\neq y\leq \ab} \p(x)\p(y)(\p(x)-\q(x))(\p(y)-\q(y))
   =\sum_{x\in[\ab]} \p(x)^2(\p(x)-\q(x))^2 - \Big({\sum_{x\in[\ab]} \p(x)(\p(x)-\q(x))}\Big)^2,
\]
which is at most $\sum_{x\in[\ab]} \p(x)^2(\p(x)-\q(x))^2 \leq  \normtwo{\p-\q}^2$. Thus, 
\begin{equation}
  \label{eq:identity:rappor:variance:cov}
  2\sum_{x<y} \cov( g(N_x,\lambda_x), g(N_y,\lambda_y) ) \leq 2\alprappor^4\ns^2\normtwo{\p}^2 + 4\alprappor^4\ns^3 \normtwo{\p-\q}^2\,,
\end{equation}
completing our bound for the cross-variance terms. Combining~\cref{eq:identity:rappor:variance:var,eq:identity:rappor:variance:cov} into~\cref{eq:identity:rappor:variance:all} lets us conclude that
    \[
        \var[T] \leq \ns^2\Paren{\frac{1}{8}\ab+2\alprappor^4\normtwo{\p}^2}  + \alprappor^2\ns^3\normtwo{\p-\q}^2\Paren{1+4\alprappor^2} 
        \leq 2\ab\ns^2 + 5\alprappor^2\ns^3\normtwo{\p-\q}^2 \,,
    \]
    which holds as long as $k \ge 2$, proving the lemma.
\end{proof}

\section*{Proof of~\cref{theorem:random:product:subsets}}\label{app:concentration:bivariate}

\begin{theorem}[Joint probability perturbation concentration, restated]
Consider a matrix $\delta\in\R^{\ab\times \ab}$ such that, for every
$i_0,j_0\in[\ab]$, $\sum_{j\in [\ab]}\delta_{i_0,j}=\sum_{i\in
  [\ab]}\delta_{i,j_0} = 0$. Let random variables $X=(X_1, \dots,
X_\ab)$ and $Y=(Y_1, \dots, Y_\ab)$ be independent and uniformly
distributed over length-$\ab$ binary sequences.
Define $Z = \sum_{(i,j)\in [\ab]\times[\ab]}\delta_{ij}X_iY_j$.
Then, for every $\alpha\in(0,1/16)$,
there exists a constant $c_\alpha>0$ such that
\[
\probaOf{Z^2\geq \alpha\norm{\delta}_F^2}\geq c_\alpha.
\]
Moreover, one can take $c_\alpha = \frac{(1-16\alpha)^2}{1024}$.
\end{theorem}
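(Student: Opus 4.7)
The statement is a Paley--Zygmund-type anticoncentration inequality for a bilinear form in independent Bernoulli variables, under a zero-marginal condition on the coefficient matrix. My plan is to reduce to a pure multilinear Rademacher form, compute the second moment exactly, control the fourth moment, and then apply Paley--Zygmund.

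\textbf{Step 1: Rademacher reparametrization.} Write $X_i = (1 + \epsilon_i)/2$ and $Y_j = (1 + \eta_j)/2$, where $(\epsilon_i)_{i\in[\ab]}$ and $(\eta_j)_{j\in[\ab]}$ are independent Rademacher variables. Expanding,
\[
Z = \frac{1}{4}\sum_{i,j}\delta_{ij}\bigl(1 + \epsilon_i + \eta_j + \epsilon_i\eta_j\bigr).
\]
The all-ones term vanishes because $\sum_{i,j}\delta_{ij}=0$, the $\epsilon_i$-term vanishes because each row sum $\sum_j \delta_{i,j}$ is zero, and the $\eta_j$-term vanishes because each column sum $\sum_i \delta_{i,j}$ is zero. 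Hence
\[
Z = \frac{1}{4}\sum_{i,j}\delta_{ij}\epsilon_i\eta_j,
\]
a homogeneous multilinear Rademacher polynomial of degree two.

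\textbf{Step 2: Second moment.} Since $\shortexpect[\epsilon_i\epsilon_{i'}]\shortexpect[\eta_j\eta_{j'}]=\indic{i=i'}\indic{j=j'}$, a direct computation yields
\[
\shortexpect[Z^2] = \frac{1}{16}\sum_{i,j}\delta_{ij}^2 = \frac{1}{16}\norm{\delta}_F^2.
\]

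\textbf{Step 3: Fourth moment bound.} I will bound $\shortexpect[Z^4]$ by a constant multiple of $(\shortexpect[Z^2])^2$. The cleanest route is hypercontractivity: for any degree-$d$ multilinear polynomial $f$ of independent Rademacher variables, $\norm{f}_4 \le 3^{d/2}\norm{f}_2$, so with $d=2$ we get $\shortexpect[Z^4] \le 81\,(\shortexpect[Z^2])^2$. Alternatively, a direct expansion of $Z^4$ in terms of $\shortexpect[\epsilon_{i_1}\cdots\epsilon_{i_4}]\shortexpect[\eta_{j_1}\cdots\eta_{j_4}]$ (nonzero only when both index multisets have all even multiplicities) yields, after grouping the contributions into "diagonal" terms $\sum_{i,j}\delta_{ij}^4$ and "off-diagonal" terms of the form $\sum_{i,j,j'}\delta_{ij}^2\delta_{ij'}^2$ etc., a bound $\shortexpect[Z^4] \le C\,(\shortexpect[Z^2])^2$ for some explicit absolute constant $C$ (the paper's choice $C=1024$ seems to correspond to this more elementary route, keeping track of the combinatorial prefactors conservatively).

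\textbf{Step 4: Paley--Zygmund and conclusion.} The Paley--Zygmund inequality applied to the nonnegative random variable $Z^2$ gives, for any $\theta\in(0,1)$,
\[
\probaOf{Z^2 \ge \theta\,\shortexpect[Z^2]} \;\ge\; (1-\theta)^2\,\frac{(\shortexpect[Z^2])^2}{\shortexpect[Z^4]}\;\ge\;\frac{(1-\theta)^2}{C}.
\]
Setting $\theta = 16\alpha$ (valid because $\alpha<1/16$) and using $\shortexpect[Z^2]=\norm{\delta}_F^2/16$ gives $\probaOf{Z^2 \ge \alpha\norm{\delta}_F^2} \ge (1-16\alpha)^2/C$, which is the claimed bound (with $C=1024$ matching the stated constant).

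\textbf{Anticipated obstacle.} Steps 1, 2, and 4 are short and essentially mechanical; the only real work is Step 3. Hypercontractivity delivers a clean constant but relies on a black-box inequality. If one wants an elementary and self-contained derivation (as appears to be the paper's preference, given the constant $1024$), the challenge is a careful bookkeeping of the fourth-moment expansion --- enumerating which index configurations contribute, using Cauchy--Schwarz to bound cross terms like $\sum_{i,j,j'}\delta_{ij}^2\delta_{ij'}^2 \le (\sum_{ij}\delta_{ij}^2)^2 = \norm{\delta}_F^4$, and accumulating the combinatorial factors into a single constant. This is tedious but poses no conceptual difficulty.
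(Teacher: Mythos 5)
Your proposal follows the paper's structure in Steps 1, 2, and 4 almost verbatim: same Rademacher reparametrization, same exact computation $\shortexpect[Z^2]=\frac{1}{16}\norm{\delta}_F^2$, same Paley--Zygmund conclusion with $\theta=16\alpha$. The only genuine divergence is Step 3, and there your guess about what the paper does is off. The paper does \emph{not} take the elementary expansion route; it quotes a moment-generating-function bound for bilinear Rademacher chaos (from the companion paper~\cite{AcharyaCT:IT1}), converts it to a fourth-moment bound via Markov ($\shortexpect[Z^4]\le\frac{4!}{\lambda^4}\shortexpect[e^{\lambda Z}]$), and optimizes the free parameter to get $\shortexpect[Z^4]\le 4\norm{\delta}_F^4$, i.e.\ $C=1024$ in your normalization. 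Your hypercontractivity route ($(4,2)$-Bonami for degree-$2$ multilinear Rademacher polynomials, $\norm{Z}_4\le 3\norm{Z}_2$) is a perfectly valid alternative and is in fact \emph{sharper}: it yields $\shortexpect[Z^4]\le 81(\shortexpect[Z^2])^2$, hence $c_\alpha=(1-16\alpha)^2/81$, a better constant than the paper's $1/1024$. So the proposal is correct; the two approaches trade a black-box hypercontractive inequality (yours, cleaner constant) against a self-contained MGF estimate already available in the authors' earlier paper (theirs, weaker constant). The sketched ``direct expansion + Cauchy--Schwarz'' fallback would also work but is not what the paper does, and your attribution of $C=1024$ to that route is a misreading: that constant arises from the MGF step, not from combinatorial bookkeeping.
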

\begin{proof}
The proof is similar in flavor to that of~\cite[Theorem~A.6]{ACT:19}  (for the case $L=2$), as we proceed by
bounding $\expect{Z}$, $\expect{Z^2}$, and $\expect{Z^4}$, before applying the Paley--Zygmund inequality to $Z^2$. While we could follow the approach of~\cite[Theorem~A.6]{ACT:19} and handle general 4-symmetric random variables by carefully keeping track of the various 
quantities in the expansion of $\expect{Z^2}$ and $\expect{Z^4}$, for conciseness we choose here to provide a simpler (albeit less general) proof relying on our specific choice of random variables.

As a first step, let $\theta_i\eqdef 2X_i-1$ and $\theta_j'\eqdef 2Y_j-1$ for $i,j\in[\ab]$, so that the $\theta_i$ and $\theta_j'$ are independent Rademacher random variables. 
Since the sum of entries of $\delta$ along any fixed row or column is zero by assumption, we note that
\begin{align}
\label{eqn:z-normalized}
Z =  \frac{1}{4} \sum_{i,j\in[\ab]} \delta_{ij} \theta_i \theta'_j.
\end{align}
Since $\theta_i$ and $\theta_j'$ are independent and $\expect{\theta_i}=0$, it follows that $\expect{Z}=0$. 
For $Z^2$, we again use independence of $\theta$ and $\theta'$ to obtain
\begin{align*}
\expect{Z^2} 
&=  \sum_{(i_1,j_1, i_2,j_2)\in[\ab]^4}\delta_{i_1j_1}\delta_{i_2j_2} \expect{\theta_{i_1}\theta_{i_2}\theta'_{j_1}\theta'_{j_2}}
=\sum_{(i_1,j_1, i_2,j_2)\in[\ab]^4}\delta_{i_1j_1}\delta_{i_2j_2} \expect{\theta_{i_1}\theta_{i_2}}\expect{\theta'_{j_1}\theta'_{j_2}}.
 \end{align*}
Moreover, since the coordinates are independent, we also have $\expect{\theta_{i_1}\theta_{i_2}}= \expect{\theta'_{i_1}\theta'_{i_2}}=\indic{i_1=i_2}$. Therefore, 
\begin{align*}
\expect{Z^2} 
=\frac1{16} \sum_{(i,j)\in[\ab]^2}\delta_{ij}^2 = \frac{1}{16}\norm{\delta}_F^2\,.
 \end{align*}

It remains to bound the fourth moment of $Z$. Using the representation of $Z$ as in~\eqref{eqn:z-normalized}, we bound the moment-generating function of $Z$ as\footnote{
See,~\eg,~\cite[Claim~IV.17]{AcharyaCT:IT1}, and note that the proof goes through even without the positive semi-definiteness assumption.}
\[
  \log\bE{\theta \theta'}{e^{\lambda Z}}
  = \log\bE{\theta \theta'}{e^{\frac{\lambda}{4} \theta^T \delta \theta'}} \leq \frac{\lambda^2}{32} \cdot \frac{\norm{\delta}_F^2}{1-\frac{\lambda^2}{4} {\rho(\delta^T\delta)}}, \qquad \forall\, 0<\lambda<\frac{2}{\sqrt{\rho(\delta^T\delta)}},
\]
where $\rho(\delta^T\delta)$ is the spectral radius of $\delta^T \delta$. Now, by a standard Markov-based argument, we have that $\bEE{Z^4} \leq \frac{4!}{\lambda^4}\bE{\theta \theta'}{e^{\lambda Z}}$ for all $\lambda > 0$. Therefore, combining the two and using the fact that $\sqrt{\rho(\delta^T\delta)} \leq \norm{\delta}_F$ we can write
\[
    \bEE{Z^4} \leq \frac{24}{\lambda^4}e^{\frac{\lambda^2}{32} \cdot \frac{\norm{\delta}_F^2}{1-\lambda^2 \norm{\delta}_F^2/4}}, \qquad \forall\, 0<\lambda<\frac{2}{\norm{\delta}_F}.
\]
Setting $\lambda = \frac{1}{C\norm{\delta}_F}$ for any constant $C>0$ yields
$
    \bEE{Z^4} \leq 24\cdot C^4 e^{\frac{1}{32(C^2-1/4)}} \norm{\delta}_F^4\,.
$
Optimizing for $C>1/2$, we can take $C=\frac{1+\sqrt{65}}{16}$ and get 
\begin{equation}
  \bEE{Z^4} \leq 4 \norm{\delta}_F^4.
\end{equation}
The remainder of the proof follows that of~\cref{theo:random:subset} using the Paley--Zygmund inequality: for every $t\in[0,1]$
\[
    \probaOf{Z^2 > \frac{t}{16}\norm{\delta}_F^2} \geq (1-t)^2 \frac{\bEE{Z^2}^2}{\bEE{Z^4}}
    \geq \frac{(1-t)^2}{256\cdot 4},
\]
establishing the theorem (by choosing $t\eqdef 16\alpha$ and $c_\alpha\eqdef \frac{(1-16\alpha)^2}{1024}$).
\end{proof}
\begin{remark}
  Although the proof of~\cref{theorem:random:product:subsets} uses full independence of the vectors $X$ and $Y$ (due to the use of the moment-generating function), it is easy to see that the statement still holds when $X$ (resp. $Y$) is only $4$-wise independent. This is because the Paley--Zygmund-based argument only relies on bounds on moments up to order four, and those moments are the same for $4$-wise and fully independent vectors.
\end{remark}

\clearpage
  \bibliographystyle{IEEEtranS}
  \bibliography{references} 

\end{document}